\def\munderbar#1{\underline{\sbox\tw@{$#1$}\dp\tw@\z@\box\tw@}}
\newtheorem{corollary}{Corollary}
\newtheorem{theorem}{\bf Theorem}
\newtheorem{proposition}{\bf Proposition}
\newtheorem{lemma}{\bf Lemma}
\newtheorem{definition}{\bf Definition}
\begin{document}
	%
	\IEEEoverridecommandlockouts
	\title{Generalized Colonel Blotto Game\vspace{-4mm}}
	\author{Aidin Ferdowsi$^{1,*}$, Anibal Sanjab$^{1,*}$, Walid Saad$^{1}$, and Tamer Ba\c{s}ar$^2$ 
		\thanks{This research was supported by the U.S. National Science Foundation under Grants OAC-1541105 and CNS-1446621, and by the Army Research Laboratory under IoBT Cooperative Agreement Number W911NF-17-2-0196.}
		\thanks{$^{1}$Aidin Ferdowsi$^*$, Anibal Sanjab$^*$ and Walid Saad are with Wireless@VT, Bradley Department of Electrical and Computer Engineering, Virginia Tech, Blacksburg, VA, USA,
			{\tt\small \{aidin,anibals,walids\}@vt.edu}}%
		\thanks{$^{2}$Tamer Ba\c{s}ar is with the Coordinated Science Laboratory, University of Illinois at Urbana-Champaign, IL, USA,  
			{\tt\small basar1@illinois.edu}}%
		\thanks{$^{*}$The first two authors have equally contributed to this paper.}
	\vspace*{-100cm}}
\makeatletter
\patchcmd{\@maketitle}
{\addvspace{0.5\baselineskip}\egroup}
{\addvspace{-5\baselineskip}\egroup}
{}
{}
\makeatother
	\maketitle
	\thispagestyle{empty}
	\pagestyle{empty}
	

	%
	\IEEEpeerreviewmaketitle
\begin{abstract}
	Competitive resource allocation between adversarial decision makers arises in a wide spectrum of real-world applications such as in communication systems, cyber-physical systems security, as well as financial and political competition. Hence, developing analytical tools to model and analyze competitive resource allocation is crucial for devising optimal allocation strategies and anticipating the potential outcomes of the competition. To this end, the Colonel Blotto game is one of the most popular game-theoretic frameworks for modeling and analyzing such competitive resource allocation problems. However, in many practical competitive situations, the Colonel Blotto game does not admit solutions in deterministic strategies and, hence, one must rely on analytically complex mixed-strategies with their associated tractability, applicability, and practicality challenges. In this regard, in this paper, a generalization of the Colonel Blotto game which enables the derivation of deterministic, practical, and implementable equilibrium strategies is proposed while accounting for scenarios with heterogeneous battlefields. In addition, the proposed generalized game factors in the consumed/destroyed resources in each battlefield, a feature that is not considered in the classical Blotto game. For this generalized game, the existence of a Nash equilibrium in pure strategies is shown. Then, closed-form analytical expressions of the equilibrium strategies are derived and the outcome of the game is characterized, based on the number of each player's resources and each battlefield's valuation. The generated results provide invaluable insights on the outcome of the competition. For example, the results show that, when both players are fully rational, the more resourceful player can achieve a better total payoff at the Nash equilibrium, a result that is not mimicked in the classical Blotto game. 
\end{abstract}\vspace{-0.3cm}
\section{Introduction}
Game-theoretic resource allocation models have become prevalent across a variety of domains such as wireless communications\cite{Wu2012anti} and \cite{2009Saad}, 
cyber-physical security~\cite{ferdowsi2017game,Gupta2014,ferdowsi2017secure}, financial markets, and political/electoral competitions\cite{Thomas2017,2018Behnezhad,Myerson1993}. 
One of the mostly adopted game-theoretic frameworks for modeling and analyzing such competitive resource allocation problems is the so-called \emph{Colonel Blotto game} (\emph{CBG})~\cite{Roberson2006}. The CBG captures the competitive interactions between two players that seek to allocate resources across a set of battlefields. The player who allocates more resources to a certain battlefield wins it and receives a corresponding valuation. The fundamental question that each player seeks to answer in a CBG is how to allocate its resources to maximize the valuations acquired from the won battlefields. In recent years, many variants of the CBG have been studied including those with homogeneous valuations~\cite{Myerson1993,Sahuguet2006}, symmetric resources, multiple players~\cite{Gupta2014}, and heterogeneous resources~\cite{schwartz2014heterogeneous,Dziubinski2013,Weinstein2012}.

This existing body of literature has primarily focused on analyzing the existence of the equilibrium in a CBG. For instance, the seminal work in~\cite{Roberson2006} has proven that an equilibrium in deterministic strategies (i.e. pure-strategies) of the Blotto game does not exist when the number of considered battlefields is greater than two. As a result, most of the literature has henceforth focused on analyzing the mixed-strategy equilibrium of the CBG, which essentially corresponds to a multi-variate probability distribution over the resources to be allocated over each of the battlefields. In this regard, the works in \cite{Dziubinski2013,Roberson2006,Weinstein2012} have studied uniform marginal distributions of resources on each battlefield. 

However, relying on mixed strategies, as has been the case in all of the CBG literature~\cite{Roberson2006,Dziubinski2013,Weinstein2012,Sahuguet2006,Myerson1993,schwartz2014heterogeneous}, presents serious challenges to the tractability, applicability, and implementability of the derived solutions in real-world environments. In fact, deriving such mixed strategies is often overly complex requiring various mathematical simplifications along the way for tractability~\cite{tan2011fair,2009Saad,Wu2012anti,wu2009optimal,ferdowsi2017game,Gupta2014}. Moreover, in terms of applicability, the optimality of such mixed strategies assumes that the game is actually played for an infinitely large number of times~\cite{bacsar1998dynamic}. In addition, previous CBG models assume win-or-lose settings over each battlefield. In other words, the player that allocates more resources over a battlefield wins all of its valuation and the other player receives nothing. However, given that resources are consumed over a certain battlefield, even when this battlefield is won or lost, these consumed resources must be accounted for in the game model. For instance, even when losing a battlefield, destroying a portion of the resources of the opponent can be considered a gain for the non-winner.   

The main contribution of this paper is therefore to develop a fundamentally novel generalization of the CBG dubbed as the \emph{Generalized Colonel Blotto Game} (\emph{GCBG}) which captures the rich and broad settings of the classical CBG, and, yet enables the derivation of tractable, deterministic, and practical solutions to the two-player competitive resource allocation problem while not being limited to studying win-or-lose settings over each battlefield. To this end, we first introduce the basics of the proposed GCBG and, then, we compare its features to the classical CBG. Subsequently, we prove the existence of an equilibrium to the GCBG -- a pure-strategy Nash equilibrium (NE) -- and provide a detailed derivation as well as closed-form analytical expressions of the pure NE strategies of the GCBG. In addition, using the derived NE strategies, we prove that when both players are fully rational, the more resourceful player is able to achieve an overall payoff that is greater than that of its opponent, and hence, win the game. In contrast with the classical CBG, in our proposed GCBG:
\begin{itemize}
	\item We derive deterministic equilibrium strategies which are practical, in the sense that the players can derive exact optimal strategies and do not need to randomize between possible strategies,
	\item We characterize low complexity solutions even for a large number of battlefields with heterogeneous valuations,
	\item We model realistic applications of competitive resource allocation problems by taking into consideration partial winning and losing over a battlefield.
\end{itemize}

Finally, we provide a numerical example which showcases the effect of the number of resources of each player on the NE strategies and outcomes of the GCBG.

\section{Proposed Generalized Colonel Blotto Game}\label{sec:game model}
\subsection{Classical Colonel Blotto Game}
A standard CBG\cite{Roberson2006} $\hspace{-0.1cm}  \Big\{\hspace{-0.05cm}\mathcal{P},\{\mathcal{Q}^j\}_{j\in \mathcal{P}}, \{R^j\}_{j\in \mathcal{P}},\mathcal{N}, \{v_i\}_{i=1}^n,$ $ \{U^j\}_{j\in\mathcal{P}} \hspace{-0.05cm}\Big\} \hspace{-0.13cm}$ is defined by six components: a) the \emph{players} in the set $ \mathcal{P}\triangleq \{a,b\} $, b) the \emph{strategy} spaces $ \mathcal{Q}^j $ for $ j \in \mathcal{P} $, c) the \emph{available resource} $ R^j $ for $ j \in \mathcal{P} $, d) a set of $n$ \emph{battlefields}, $\mathcal{N}$, e) the normalized \emph{value} of each  battlefield, $ v_i $ for $i\in\mathcal{N}$, and f) the \emph{utility function}, $ U^j $, for each player. 
The set of possible strategies for each player, $\mathcal{Q}^j$, corresponds to the different possible distributions of its $R^j$ resources across the battlefields. Hence, $\mathcal{Q}^j$ is defined as \vspace{-0.1cm}
\begin{align}\label{strategy_set}
\mathcal{Q}^j=\left\{\boldsymbol{r}^j\Bigg|\sum_{i=1}^{n} r_i^j\leq R^j,r_i^j\geq 0  \right\},
\end{align}
where $ r_i^j $ is the number of allocated resource units by player $j$ to battlefield $i$, and $\boldsymbol{r}^j\in\mathcal{Q}^j$ is the vector of these allocated resources (an allocation strategy of player $j$) . To this end, the payoff of player $j$, for a battlefield $i$, is defined as:\vspace{-0.1cm}
\begin{align}\label{payoff}
u^j_i(r^j_i,r^{-j}_i)=
\begin{cases}
v_i, & \textrm{if } r^j_i> r^{-j}_i,\\
\frac{v_i}{2}, & \textrm{if } r^j_i= r^{-j}_i,\\
0, &\textrm{if } r^j_i< r^{-j}_i,
\end{cases}
\end{align}
where $r_i^{-j}$ corresponds to the resources allocated by the opponent of $j$ to battlefield $i$. Such resource allocation strategies are known as pure deterministic strategies since the allocation is not randomized over the battlefields. The utility function of each player, $ U^j $, over all battlefields in $\mathcal{N}$ is defined as:\vspace{-0.2cm}
\begin{align}\label{Totalpayoff}\vspace{-3mm}
U^j(\boldsymbol{r}^j,\boldsymbol{r}^{-j})\hspace{-1mm}=\hspace{-1mm}\sum_{i=1}^{n}u^j_i(r^j_i,r^{-j}_i).
\end{align}

Each player in $\mathcal{P}$ aims to choose a resource allocation strategy that maximizes its payoff over the $n$ battlefields given the potential resource allocation strategy of its opponent. As such, the pure-strategy NE of the CBG can be formally defined as follows:\vspace{-2mm}
\begin{definition}
	A strategy profile $({\boldsymbol{r}^j}^*,{\boldsymbol{r}^{-j}}^*)$, such that ${\boldsymbol{r}^j}^* \in \mathcal{Q}^j$ and ${\boldsymbol{r}^{-j}}^* \in \mathcal{Q}^{-j}$, is a pure-strategy \emph{Nash equilibrium} of the CBG if
	\begin{align}
	U^j({\boldsymbol{r}^j}^*,{\boldsymbol{r}^{-j}}^*) \geq 	U^j({\boldsymbol{r}^j},{\boldsymbol{r}^{-j}}^*), \forall {\boldsymbol{r}^j} \in \mathcal{Q}^j.
	\end{align}
\end{definition}

For the CBG, it has been proven~\cite{Roberson2006} that for $R^b>R^a $ and $nR^a>R^b$, there exists no pure-strategy NE. As a result, for solving the CBG, a common methodology is to explore mixed strategies based on which each player $j\in\mathcal{P}$ chooses a probability distribution (or a multi-variate probability density function) over $\mathcal{Q}^j$ to maximize its potential expected payoff. However, the reliance on such mixed strategies introduces high analytical complexity and presents challenges in terms of the tractability of the derivation of the solutions as well as to the practicality and applicability of these solutions. Hence, to enable the derivation of tractable, deterministic, and practical solutions to the competitive resource allocation problem, we next propose a generalization of the CBG dubbed generalized Colonel Blotto game. 

\subsection{Proposed Generalized Colonel Blotto Game}\label{sec:GCB}
A GCBG $ \Big\{\mathcal{P},\{\mathcal{Q}^j\}_{j\in \mathcal{P}}, \{R^j\}_{j\in \mathcal{P}},\mathcal{N}, \{v_i\}_{i=1}^n, \{\tilde{U}^j\}_{j\in\mathcal{P}},$ $k\Big\} $ is defined by seven components that include the players, strategy spaces, available resources, battlefields, and valuation of battlefields as is the case in the CBG. However, we define a new utility function $ \{\tilde{U}^j\}_{j\in\mathcal{P}} $ as an approximation to the utility function of the CBG. This approximation is based on a continuous differentiable function and an approximation parameter, $k$. When $k$ increases, the GCBG will very closely resemble the CBG. At the limit, when $k$ goes to infinity, the GCBG converges to the CBG. However, in contrast with the CBG, the differentiability of the approximate utility function enables the derivation of deterministic equilibrium strategies. 

In this respect, as defined in~(\ref{payoff}), the payoff from each battlefield resembles a step function. Hence, we propose an approximation of this function, $\tilde{u}^j_i(r^j_i,r^{-j}_i,k)$, referred to hereinafter as $k-$approximation, defined as follows:
\begin{align}\label{kpayoff}
\tilde{u}^j_i(r^j_i,r^{-j}_i,k)\triangleq\frac{v_i}{\pi}\arctan\left(k(r^j_i-r^{-j}_i)\right)+\frac{v_i}{2},
\end{align}
and the utility function of each player $ \tilde{U}^j $ is the summation of the $k-$approximation payoffs from each battlefield. Next, in Lemma~\ref{atan}, we will show that \eqref{kpayoff} precisely approximates $u_i^j(r_i^j,r_i^{-j})$ as the approximation factor $k$ tends to infinity. Fig.~\ref{fig:atan} provides a number of plots of $k-$approximation functions for different values of $k$.

\begin{lemma}\label{atan}
	The payoff from each battlefield $i\in\mathcal{N}$ can be expressed as follows:
	\begin{align}
	u^j_i(r^j_i,r^{-j}_i)=\lim\limits_{k\rightarrow \infty}\tilde{u}^j_i(r^j_i,r^{-j}_i,k).
	\end{align}
\end{lemma}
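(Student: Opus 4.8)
The plan is to prove the identity by a direct case analysis on the sign of the difference $r^j_i - r^{-j}_i$, exploiting the well-known limiting behavior of the arctangent function: $\lim_{x\to+\infty}\arctan(x)=\pi/2$, $\lim_{x\to-\infty}\arctan(x)=-\pi/2$, and $\arctan(0)=0$. Since the resources $r^j_i$ and $r^{-j}_i$ are fixed when the limit in $k$ is taken, the quantity $r^j_i - r^{-j}_i$ is a constant, so the behavior of $k(r^j_i-r^{-j}_i)$ as $k\to\infty$ is governed entirely by whether this constant is positive, negative, or zero.

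First I would treat the case $r^j_i > r^{-j}_i$: here $k(r^j_i - r^{-j}_i)\to +\infty$, so $\tilde{u}^j_i \to \frac{v_i}{\pi}\cdot\frac{\pi}{2} + \frac{v_i}{2} = v_i$, matching the first branch of~\eqref{payoff}. Next, for $r^j_i = r^{-j}_i$, the argument of the arctangent is identically $0$ for every $k$, so $\tilde{u}^j_i = 0 + \frac{v_i}{2} = \frac{v_i}{2}$ for all $k$, and the limit is trivially $\frac{v_i}{2}$, matching the second branch. Finally, for $r^j_i < r^{-j}_i$, we have $k(r^j_i - r^{-j}_i)\to -\infty$, hence $\tilde{u}^j_i \to \frac{v_i}{\pi}\cdot(-\frac{\pi}{2}) + \frac{v_i}{2} = 0$, matching the third branch. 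Combining the three cases shows that $\lim_{k\to\infty}\tilde{u}^j_i(r^j_i,r^{-j}_i,k)$ equals $u^j_i(r^j_i,r^{-j}_i)$ pointwise for every admissible allocation pair, which is exactly the claimed identity.

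There is no genuine obstacle here; the only point requiring a word of care is the observation that the limit is taken with the allocations held fixed, so that the sign of $r^j_i-r^{-j}_i$ is a well-defined constant and the three cases are exhaustive and mutually exclusive. Summing over $i\in\mathcal{N}$ then immediately yields the corresponding statement $U^j(\boldsymbol{r}^j,\boldsymbol{r}^{-j}) = \lim_{k\to\infty}\tilde{U}^j(\boldsymbol{r}^j,\boldsymbol{r}^{-j},k)$ for the aggregate utilities, which justifies calling the $k$-approximation a faithful approximation of the classical Blotto payoff.
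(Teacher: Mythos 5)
Your proposal is correct and follows essentially the same argument as the paper: a three-way case analysis on the sign of $r^j_i - r^{-j}_i$, using $\arctan(\pm\infty)=\pm\pi/2$ and $\arctan(0)=0$ to recover the three branches of~\eqref{payoff}. Your added remark that the allocations are held fixed when taking the limit is a small but valid point of care not spelled out in the paper.
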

\begin{proof}
	The limit of the $k$-approximate payoff when $k\rightarrow\infty $ can be written as:
	\begin{align}\nonumber
	\lim\limits_{k\rightarrow \infty}\tilde{u}^j_i(r^j_i,r^{-j}_i,k)\hspace{-1mm}=\hspace{-1mm}\begin{cases}
	\frac{v_i}{\pi}\arctan\left(+\infty\right)+\frac{v_i}{2}\hspace{-0.5mm}=\hspace{-0.5mm}v^j_i,&\hspace{-2mm}r_i^j \hspace{-0.5mm} >\hspace{-0.5mm} r_i^{-j}, \\
	0+\frac{v_i}{2}=\frac{v_i}{2},&\hspace{-2mm}r_i^j\hspace{-0.5mm} = \hspace{-0.5mm}r_i^{-j},\\
	\frac{v_i}{\pi}\arctan\left(-\infty\right)+\frac{v_i}{2}\hspace{-0.5mm}=\hspace{-0.5mm}0,&\hspace{-2mm}r_i^j\hspace{-0.5mm} < \hspace{-0.5mm}r_i^{-j}.
	\end{cases}
	\end{align}
	Therefore, as compared to~(\ref{payoff}), 
	\begin{align}
	\lim\limits_{k\rightarrow \infty}\tilde{u}^j_i(r^j_i,r^{-j}_i,k)=u^j_i(r^j_i,r^{-j}_i),
	\end{align}
	which proves the lemma.
\end{proof}

Hence, Lemma \ref{atan} provides an approximation for the payoff from each battlefield. Using $ \tilde{u}^j_i(r^j_i,r^{-j}_i,k) $ instead of $ u^j_i(r^j_i,r^{-j}_i) $ provides two additional benefits as compared to the CBG: a) the continuity and differentiability of $ \tilde{u}^j_i(r^j_i,r^{-j}_i,k) $ helps in deriving pure-strategy NEs, and b) $\tilde{u}^j_i(r^j_i,r^{-j}_i,k)$ captures the notion of partial-win-or-lose by accounting for resource consumption and losses which is a practical feature in a GCBG that has not been considered in the CBG. 
In fact, in practice, the payoff from each battlefield must consider the portion of resources destroyed/consumed following the ``battle'' between the two players. This means that the payoff from each battlefield has to capture the loss due to the resources destroyed in this battlefield, which also corresponds to a gain for the opponent. The $k-$approximate utility function in~(\ref{kpayoff}) inherently captures this aspect. Fig. \ref{atan} shows the continuous transition from losing to winning a certain battlefield when using $k$-approximate payoff functions. Here, we note that the larger $k$ is, the closer the $k$-approximation approximates the classical CBG.

\begin{figure}[!t]
	\centering
	\captionsetup{singlelinecheck = false, justification=justified}\vspace{-0.2cm}
	\includegraphics[width=0.96\columnwidth]{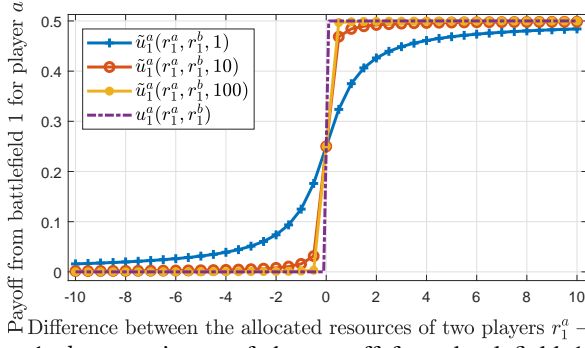}\vspace{-0.3cm}
	\caption{$ k $-approximate of the payoff from battlefield 1 for player $ a $ when $ v_1^a=0.5 $.}
	\label{fig:atan}
	\vspace{-0.7cm}
\end{figure}

Now, we define the $k$-approximate utility function per player:\vspace{-5mm}
\begin{align}\label{kutility}
\tilde{U}^j(\boldsymbol{r}^j,\boldsymbol{r}^{-j},k)\triangleq\sum_{i=1}^{n}\tilde{u}^j_i(r^j_i,r^{-j}_i,k).
\end{align}\vspace{-3mm}

From \eqref{kpayoff} and \eqref{kutility}, and given that $\arctan$ is an odd function, the players' payoffs from each battlefield can be related following $
\tilde{u}^j_i(r^j_i,r^{-j}_i,k)+\tilde{u}^{-j}_i(r^{-j}_i,r^j_i,k)=v_i, $
which results in the following relation between the players' utility functions: \vspace{-0.1cm}
\begin{align}
\label{constantsum}
\tilde{U}^j(\boldsymbol{r}^j,\boldsymbol{r}^{-j},k)+\tilde{U}^{-j}(\boldsymbol{r}^{-j},\boldsymbol{r}^j,k)&=1.
\end{align}

The relation in~\eqref{constantsum} shows that the GCBG is a \emph{constant-sum game} in which the pure-strategy NE is the solution of the following \emph{minimax} problem\cite{bacsar1998dynamic}:\vspace{-0.1cm}
\begin{align}\label{maximin}
V^j(k)\triangleq 1- V^{-j}(k)\triangleq\hspace{-3mm}\min_{\boldsymbol{r}^{-j}\in \mathcal{Q}^{-j}}\max_{\boldsymbol{r}^j\in \mathcal{Q}^j}\tilde{U}^j(\boldsymbol{r}^j,\boldsymbol{r}^{-j},k),
\end{align}
where $ V^j(k) $ is the value of the GCBG with the $k$-approximate utility function for player $j$. 
Next, we solve the minimax problem in \eqref{maximin} and derive the pure-strategy  NE of the GCBG.
\vspace{-3mm}
\section{Solution of the GCBG}\label{sec:solution}

To solve the minimax problem in \eqref{maximin}, we must derive $\boldsymbol{r}^{a*}\in\mathcal{Q}^a$ which maximizes $\tilde{U}^a(\boldsymbol{r}^a,\boldsymbol{r}^{b},k)$ for all possible $\boldsymbol{r}^{b}\in\mathcal{Q}^b$. Then, we choose $\boldsymbol{r}^{b*}\in\mathcal{Q}^b$ which minimizes  $\tilde{U}^a(\boldsymbol{r}^{a*},\boldsymbol{r}^{b},k)$ considering the response, $\boldsymbol{r}^{a*}$, to any chosen $\boldsymbol{r}^{b}\in\mathcal{Q}^b$. Given the resource limitation of each player, a limit exists on the chosen allocations as expressed in~(\ref{strategy_set}). First, we prove that this limit is binding, i.e. each player is always better off spending all their available resources. \vspace{-0.1cm}

\begin{proposition}\label{lemmaresources}
	All the strategies of the form $\sum_{i=1}^{n}r_i^j<R^j $ are dominated by the strategies  $\sum_{i=1}^{n}r_i^j=R^j $ i.e, player $ j $ is better off using all of its available resource.
\end{proposition}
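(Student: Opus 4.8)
The plan is to exploit the strict monotonicity of the $k$-approximate battlefield payoff in a player's own allocation. Since $\arctan$ is strictly increasing, for any fixed opponent allocation $r^{-j}_i$ and any fixed $k>0$ the map $r^j_i\mapsto \tilde{u}^j_i(r^j_i,r^{-j}_i,k)=\frac{v_i}{\pi}\arctan\!\left(k(r^j_i-r^{-j}_i)\right)+\frac{v_i}{2}$ is strictly increasing whenever $v_i>0$; equivalently, $\frac{\partial \tilde{u}^j_i}{\partial r^j_i}=\frac{v_i k}{\pi\left(1+k^2(r^j_i-r^{-j}_i)^2\right)}>0$. Consequently $\tilde{U}^j$ is strictly increasing in each of its own coordinates $r^j_i$ (for battlefields with $v_i>0$), holding the opponent's strategy fixed.

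Next I would take an arbitrary strategy $\boldsymbol{r}^j\in\mathcal{Q}^j$ with $\sum_{i=1}^n r^j_i<R^j$, set the slack $\delta\triangleq R^j-\sum_{i=1}^n r^j_i>0$, pick a battlefield $i_0$ with $v_{i_0}>0$ (one exists since $\sum_{i=1}^n v_i=1$), and define the reallocation $\boldsymbol{r}'^j$ by $r'^j_{i_0}=r^j_{i_0}+\delta$ and $r'^j_i=r^j_i$ for $i\neq i_0$. Then $\boldsymbol{r}'^j\geq\boldsymbol{0}$ componentwise and $\sum_{i=1}^n r'^j_i=R^j$, so $\boldsymbol{r}'^j\in\mathcal{Q}^j$ and it uses all available resources. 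Because only coordinate $i_0$ changed and it strictly increased, the monotonicity from the first step yields
\begin{align}\nonumber
\tilde{U}^j(\boldsymbol{r}'^j,\boldsymbol{r}^{-j},k) > \tilde{U}^j(\boldsymbol{r}^j,\boldsymbol{r}^{-j},k)\quad\text{for every }\boldsymbol{r}^{-j}\in\mathcal{Q}^{-j}.
\end{align}
Since this strict inequality holds uniformly over all opponent strategies, $\boldsymbol{r}^j$ is (strictly) dominated by $\boldsymbol{r}'^j$, which establishes the proposition.

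I do not expect a genuine obstacle here: the argument is a one-line monotonicity observation plus an explicit feasible reallocation. The only points requiring a little care are (i) noting that at least one battlefield carries strictly positive valuation so that moving the slack there strictly improves the payoff, and (ii) observing that dumping the entire slack $\delta$ into a single battlefield preserves feasibility, which is immediate because the budget constraint in~\eqref{strategy_set} is an inequality. It is also worth remarking that we in fact prove strict domination, which is stronger than the stated claim and will later justify restricting attention to the boundary $\sum_{i=1}^n r^j_i=R^j$ when solving the minimax problem in~\eqref{maximin}.
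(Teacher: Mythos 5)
Your proof is correct and rests on the same idea as the paper's: move the unused slack onto a single battlefield and invoke the strict monotonicity of $\arctan$ in the player's own allocation to get a strict payoff increase against any fixed opponent strategy. The paper phrases this as a contradiction against a supposed interior maximizer (dumping the slack into battlefield~1), while you argue domination directly and uniformly over the opponent's strategies, but the substance is identical.
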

\begin{proof}
	The proof is by contradiction. Suppose $ \bar{\boldsymbol{r}}^j=[\bar{r}^j_1,\dots,\bar{r}^j_n] $ is the vector that maximizes $ \tilde{U}^j(\boldsymbol{r}^j,\boldsymbol{r}^{-j},k)$ such that: $	\sum_{i=1}^n \bar{r}^j_i =\bar{R}^j<R^j.	$
	Then, the payoff for player $ j $ choosing strategy $ \bar{\boldsymbol{r}}^j $ is:\vspace{-2mm}
	\begin{align}
		\tilde{U}^j(\bar{\boldsymbol{r}}^j,\bar{\boldsymbol{r}}^{-j},k)= \sum_{i=1}^{n}\frac{v_i}{\pi}\arctan\left(k(\bar{r}_i^j-\bar{r}_i^{-j})\right) +\frac{1}{2}.\nonumber
	\end{align}
	
	Now, considering arbitrarily player $a$, we define a new allocation vector for player $ a $ as $ \boldsymbol{\rho}^a=[\bar{r}^a_1+R^a-\bar{R}^a,\bar{r}^a_2,\dots,\bar{r}^a_n] $. Then, the payoff for player $ a $ will be:
	\begin{align}
		&\tilde{U}^a(\boldsymbol{\rho}^a,\bar{\boldsymbol{r}}^b,k)=\frac{v_1}{\pi}\arctan\left(k(\bar{r}^a_1+R^a-\bar{R}^a-\bar{r}_i^b)\right)\nonumber\\
		&+\sum_{i=2}^{n}\frac{v_i}{\pi}\arctan\left(k(\bar{r}_i^a-\bar{r}_i^b)\right) +\frac{1}{2}\nonumber\\
		&>\frac{v_1}{\pi}\arctan\left(k(\bar{r}_1^a-\bar{r}_1^b)\right)+\sum_{i=2}^{n}\frac{v_i}{\pi}\arctan\left(k(\bar{r}_i^a-\bar{r}_i^b)\right)+\frac{1}{2}\nonumber\\\nonumber
		&=\tilde{U}^a(\bar{\boldsymbol{r}}^a,\bar{\boldsymbol{r}}^b,k),
	\end{align}
which contradicts with the assumption $ \bar{\boldsymbol{r}}^j=[\bar{r}^j_1,\dots,\bar{r}^j_n] $  maximizes $ \tilde{U}^j(\boldsymbol{r}^j,\boldsymbol{r}^{-j},k)$. 
\end{proof}

From Proposition \ref{lemmaresources}, we know that the limit on the total number of allocated resources is binding. Before characterizing the solution of the game, we first define the difference between the allocated resources by each player on each battlefield $i\in\mathcal{N}$, $z_i$, and the difference between the total available resources, $D$, as follows:\vspace{-3mm}
\begin{align}\label{zi}
z_i=r_i^a-r_i^b,\,\,\,D=R^a-R^b.
\end{align}

In addition, we consider, without loss of generality, that $R^b<R^a$, which implies that the available resources of an arbitrary player $b$ are less than those of its opponent, player $a$. We also consider the case in which $R^a/n<R^b$, which eliminates the trivial case studied in the CBG, in which, if $R^a>nR^b$, player $a$ can trivially win the game. Using these two inequality assumptions on $R^a$ and $R^b$, the constraint on $D$ can be expressed as follows:\vspace{-0.2cm}
\begin{align}\label{Dlimit}
0<D<(n-1)R^b.
\end{align}

From Proposition \ref{lemmaresources} and \eqref{zi}, we can write:\vspace{-0.2cm}
\begin{align}\label{sumdif}
\sum_{i=1}^{n} z_i=D.
\end{align}\vspace{-0.2cm}

Since the utility functions for both players are dependent on the difference of allocated resources on each battlefield, $ z_i $, hereinafter, we use $\tilde{U}^a(\boldsymbol{z},k)$ and $\tilde{U}^a(\boldsymbol{r}^a,\boldsymbol{r}^b,k)$ interchangeably, where $ \boldsymbol{z}=[z_1,\dots,z_n]$. In what follows, we fully characterize the pure-strategy NE for the GCB game. 

First, we focus on solving the maximization component of the minimax problem in~(\ref{maximin}). In this regard, Theorem~\ref{theoremlocamax} characterizes the local maxima of $\tilde{U}^a(\boldsymbol{r}^a,\boldsymbol{r}^b,k)$. Here, without loss of generality, we assume that the battlefields are indexed based on an increasing order of their value, i.e. $v_1\geq v_2 \geq \dots \geq v_n$ with one of these inequalities being strict to avoid solving a trivial case in which all battlefields are identical. \vspace{-0.2cm}  

\begin{theorem}\label{theoremlocamax}
	A local maximum $\boldsymbol{z}^*=[z_1^*,\dots,z_n^*]$ of $\tilde{U}^a(\boldsymbol{z},k)$ with respect to $\boldsymbol{z}$, is a solution to the following equation:\vspace{-0.2cm}
	\begin{align}\label{DConstraint}
	\sum_{i=1}^{n-1}z_i^*+z_n=D,
	\end{align}
	where\vspace{-0.2cm}
	\begin{align}\label{solution1tonminus1}
	z^*_i=\sqrt{\frac{1}{k^2v_n}\left(k^2z^2_nv_i+v_i-v_n\right)},\,\, \textrm{for } i\in\mathcal{M},
	\end{align}
	and $ z_n>0. $\vspace{-0.2cm}
\end{theorem}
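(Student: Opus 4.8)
The plan is to recast the inner maximization of the minimax problem \eqref{maximin} as a smooth, constrained optimization in the difference vector $\boldsymbol{z}$ and attack it with Lagrange multipliers. Because the values are normalized, $\sum_{i=1}^{n} v_i = 1$, and $\arctan$ is odd, \eqref{kpayoff}--\eqref{kutility} give $\tilde{U}^a(\boldsymbol{z},k)=\tfrac12+\tfrac{1}{\pi}\sum_{i=1}^{n}v_i\arctan(k z_i)$, a separable $C^{\infty}$ function; by Proposition~\ref{lemmaresources} together with \eqref{sumdif} it is to be maximized subject to the single affine constraint $\sum_{i=1}^{n}z_i=D$. One first notes that, under \eqref{Dlimit}, any $\boldsymbol{z}$ on this hyperplane with the magnitudes found below is produced by a feasible (indeed interior) pair $(\boldsymbol{r}^a,\boldsymbol{r}^b)$, so the interior first- and second-order conditions are legitimate tools.

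Next I would write down stationarity: with a multiplier $\lambda$ for $\sum_i z_i=D$, $\partial\tilde{U}^a/\partial z_i=\tfrac{k v_i}{\pi(1+k^2 z_i^2)}=\lambda$ for every $i\in\mathcal{N}$. Eliminating $\lambda$ by equating the $i$-th equation with the $n$-th gives $\tfrac{v_i}{1+k^2 z_i^2}=\tfrac{v_n}{1+k^2 z_n^2}$; clearing denominators and solving the resulting linear relation for $z_i^2$ yields $z_i^2=\tfrac{1}{k^2 v_n}\bigl(k^2 z_n^2 v_i+v_i-v_n\bigr)$, whose right-hand side is $\geq 0$ because the indexing convention gives $v_i\geq v_n$. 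This is \eqref{solution1tonminus1} up to the choice of the root, and substituting $z_i=z_i^*$ for $i\in\mathcal{M}=\{1,\dots,n-1\}$ into $\sum_{i=1}^{n}z_i=D$ is exactly \eqref{DConstraint}.

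The remaining --- and, I expect, the delicate --- step is to select the positive branch and justify $z_n>0$, since stationarity alone permits $z_i=\pm\sqrt{\,\cdot\,}$. For this I would use the second-order necessary condition at a local maximum on the hyperplane: the objective being separable, its Hessian is diagonal with $i$-th entry $\tfrac{v_i}{\pi}\tfrac{d^2}{dz_i^2}\arctan(k z_i)=-\tfrac{2 v_i k^3 z_i}{\pi(1+k^2 z_i^2)^2}$, which must be negative semidefinite restricted to $\{\boldsymbol{\delta}:\sum_i\delta_i=0\}$. Since every $v_i>0$, testing with difference directions $\boldsymbol{\delta}=\boldsymbol{e}_i-\boldsymbol{e}_j$ and using the stationarity identity $1+k^2 z_i^2=(v_i/v_n)(1+k^2 z_n^2)$ (which forces $|z_i|\geq|z_n|$ for all $i$) rules out negative coordinates, so at a local maximum $z_i>0$ for every $i$; hence the positive square root is the correct branch and $z_n>0$. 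Combining \eqref{solution1tonminus1} with the binding constraint \eqref{DConstraint} then gives the stated characterization. The main obstacle is carrying out this sign analysis cleanly, and confirming that no local maximum hides on the boundary of the feasible set, rather than the Lagrange computation itself, which is routine.
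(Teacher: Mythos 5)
Your Lagrange-multiplier set-up is legitimate and, on the first-order side, equivalent to the paper's proof (the paper substitutes $z_n=D-\sum_{i=1}^{n-1}z_i$ and works with the reduced, non-diagonal Hessian instead of the diagonal Hessian restricted to the hyperplane, but the stationarity conditions and the candidate set $z_i^*=\pm\sqrt{\tfrac{1}{k^2v_n}(k^2z_n^2v_i+v_i-v_n)}$ come out identical). The genuine gap is precisely in the step you flag as delicate: the sign selection. Testing the Hessian only on difference directions $\boldsymbol{e}_i-\boldsymbol{e}_j$ and using the stationarity identity $1+k^2z_i^2=\tfrac{v_i}{v_n}(1+k^2z_n^2)$ gives the pairwise necessary conditions $\tfrac{z_i}{v_i}+\tfrac{z_j}{v_j}\geq 0$. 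These do rule out any stationary point with \emph{two or more} negative coordinates, but they cannot rule out a stationary point with \emph{exactly one} negative coordinate. Concretely, if $z_n<0$ while $z_i>0$ for $i\in\mathcal{M}$, the pair $(i,n)$ condition reduces (after substituting the stationarity relation) to $k^2z_n^2\leq v_n/v_i$, i.e., it is automatically satisfied whenever $|z_n|=O(1/k)$ --- and the spurious negative root of $f_k(z_n)=D$ that the theorem must discard is exactly of that small order. So your test is silent in the one configuration that actually needs to be excluded, and the conclusion ``$z_i>0$ for every $i$'' does not follow.

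To close this case one needs the full restricted second-order condition, not the pairwise one: with $w_i=z_i/v_i$, nonnegativity of $\sum_i w_i\delta_i^2$ on $\{\sum_i\delta_i=0\}$ with a single $w_n<0$ holds if and only if $w_n+\bigl(\sum_{i\neq n}1/w_i\bigr)^{-1}\geq 0$, a harmonic-sum criterion that pairwise directions cannot detect; it is exactly the quantity the paper isolates as the last pivot $\xi$ in~\eqref{Xi}. Moreover, excluding the mixed-sign configurations is not a purely algebraic consequence of the necessary conditions: the paper's Cases~2 and~3 rely on binomial approximations and a ``practically large $k$'' argument (consistent with the explicit $Dk$ thresholds imposed in the later theorems), whereas your argument never invokes any largeness of $k$. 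Without such a quantitative condition, a stationary point with one slightly negative $z_n$ can even satisfy the full second-order necessary condition, so the sign claim cannot be recovered by your route as written; you would need to reproduce (or replace) the paper's $\xi$-analysis for the single-negative-coordinate case.
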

\begin{proof}
	Based on Proposition~\ref{lemmaresources}, $z_n$ can be expressed as $z_n=D-\sum_{i=1}^{n-1} z_i$, and player $a$'s utility function will be:
	\begin{align}
		\tilde{U}^a(\boldsymbol{z},k)&=\sum_{i=1}^{n-1}\frac{v_i}{\pi}\arctan(kz_i)\nonumber\\
		&+\frac{v_n}{\pi}\arctan\left(k(D-\sum_{i=1}^{n-1} z_i)\right)+\sum_{i=1}^{n}\frac{v_i}{2}.
	\end{align}
	
	Hence, to find the local maxima of player $ a $'s utility function, from the first order necessary condition of optimality, we know that, if $ \boldsymbol{z}^* $ is a local maximizer and $ \tilde{U}^a $ is continuously differentiable, then
$\nabla \tilde{U}^a(\boldsymbol{z}^*,k)=\boldsymbol{0}.$ Therefore, for $ i\in \mathcal{M} \triangleq \mathcal{N}\setminus \{n\} $, we must have $ \frac{\partial \tilde{U}^a(\boldsymbol{z},k)}{\partial z_i}=0, $ and, hence:
	\begin{align}\label{FirstDerivative}
		\frac{1}{\pi}\frac{kv_i}{k^2z_i^2+1}-\frac{1}{\pi}\frac{kv_n}{k^2(D-\sum_{l=1}^{n-1} z_l)^2+1}&=0,\nonumber\\
		\frac{kv_i}{k^2z_i^2+1}-\frac{kv_n}{k^2z_n^2+1}&=0.
	\end{align}
	
	Thus, $z_i^*$ must be one of two solutions:
	\begin{align}\label{solutions}
		z_i^*=\pm\sqrt{\frac{1}{k^2v_n}\left(k^2z^2_nv_i+v_i-v_n\right)},\,\,i\in \mathcal{M},
	\end{align}
	which, from \eqref{sumdif}, must meet $ \sum_{i=1}^{n-1}z_i^*+z_n=D $. 
	
	Next, from the second order necessary conditions of optimality, we know that, if $ \boldsymbol{z}^* $ is a local maximizer of $ \tilde{U}^a(\boldsymbol{z},k)$, then the Hessian matrix, $ \nabla^2\tilde{U}^a(\boldsymbol{z}^*,k)$, must be negative definite. The second order derivatives, and elements of the Hessian matrix, are calculated as follows:
	 \begin{align}
	 	\frac{\partial^2\tilde{U}^a(\boldsymbol{z},k)}{\partial z_i^2}&\text{$=$} \text{$-$}\frac{1}{\pi}\frac{2k^3v_iz_i}{(k^2z^2_i\text{$+$}1)^2}\textrm{$-$}\frac{1}{\pi}\frac{2k^3v_n(D\text{$-$}\sum_{l=1}^{n\text{$-$}1} z_l)}{(k^2(D\text{$-$}\sum_{l=1}^{n\text{$-$}1} z_l)^2\text{$+$}1)^2}\nonumber\\
	 	&\text{$=$}\text{$-$}\frac{1}{\pi}\frac{2k^3v_iz_i}{(k^2z^2_i+1)^2}-\frac{1}{\pi}\frac{2k^3v_nz_n}{(k^2z^2_n+1)^2},\\
	 	\frac{\partial^2\tilde{U}^a(\boldsymbol{z},k)}{\partial z_i\partial z_m}&=-\frac{1}{\pi}\frac{2k^3v_n(D-\sum_{l=1}^{n-1} z_l)}{(k^2(D-\sum_{l=1}^{n-1} z_l)^2+1)^2}\nonumber\\
	 	&=-\frac{1}{\pi}\frac{2k^3v_nz_n}{(k^2z^2_n+1)^2}.
	 \end{align} 
	 
	 From \eqref{FirstDerivative}, we have:
	 \begin{align}\label{FirstDerivative2}
	 	\frac{k^2z_n^2+1}{k^2z_i^{*^2}+1}=\frac{v_n}{v_i}.
	 \end{align}
	 
	 Plugging~(\ref{FirstDerivative2}) in the expressions of the elements of the Hessian matrix yields:
	 \begin{align}
	 	\frac{\partial^2\tilde{U}^a(\boldsymbol{z},k)}{\partial z_i^2}\Bigr|_{\boldsymbol{z}=\boldsymbol{z^*}}&=-\frac{2k^3v_n}{\pi(k^2z^2_n+1)^2}(\frac{v_n}{v_i}z_i^*+z_n),\\
	 	\frac{\partial^2\tilde{U}^a(\boldsymbol{z},k)}{\partial z_i\partial z_m}\Bigr|_{\boldsymbol{z}=\boldsymbol{z^*}}&=-\frac{2k^3v_n}{\pi(k^2z^2_n+1)^2}z_n.
	 \end{align}
	 
	 Then, the Hessian matrix at $\boldsymbol{z}^*$ is expressed as:
	 \begin{align}
	 	\boldsymbol{H}=\nabla^2 \tilde{U}^a(\boldsymbol{z}^*,k)= -\frac{2k^3v_n}{\pi(k^2z^2_n+1)^2}\boldsymbol{H}_1,
	 \end{align}
	 where
	 \begin{align}\label{H1}
	 	\boldsymbol{H}_1\textrm{=}\left[\hspace{-0.5mm}\begin{array}{c c c c}
	 	\frac{v_n}{v_1}z^*_1\textrm{$+$}z_n & z_n & \cdots &z_n \\
	 	z_n & \frac{v_n}{v_2}z^*_2\textrm{$+$}z_n & \cdots & \vdots \\
	 	\vdots & \vdots & \ddots & \vdots \\
	 	z_n & z_n &\cdots & \frac{v_n}{v_{n-1}}z^*_{n-1}\textrm{$+$}z_n
	 	\end{array}\hspace{-0.5mm}\right].
	 \end{align}

For $\boldsymbol{z}^*$ to be a local maximum, $\boldsymbol{H}$ must be negative definite and, hence, $\boldsymbol{H}_1$ must be positive definite. Performing row operations on $\boldsymbol{H}_1$, we obtain an upper-triangular matrix: 
 \begin{align}\label{H2}
	 	\boldsymbol{H}_2=\left[\begin{array}{c c c c c}
	 	z^*_1 & 0 & \cdots & 0 & -z^*_{n-1} \\
	 	0 & z^*_2 & \ddots &\vdots & \vdots \\
	 	\vdots & \ddots & \ddots & 0 & \vdots \\
	 	\vdots &  & \ddots & z^*_{n-2} & -z_{n-1}^*\\
	 	0 & \cdots & \cdots & 0 & \xi
	 	\end{array}\right],
	 \end{align}
	 where 
	 \begin{align}\label{Xi}
	 \xi=\frac{v_n}{v_{n-1}}z_{n-1}^*\left(1+\sum_{i=1}^{n-1}\frac{v_iz_n}{v_nz_i^*}\right). 
	 \end{align}
	 
	$\boldsymbol{H}_1$ is positive definite if and only if its pivots (i.e. its diagonal elements) are all positive since the number of positive pivots of $\boldsymbol{H}_2$ is equal to the number of positive eigenvalues of $\boldsymbol{H}_1$~\cite{nonlinear2013programming}. Hence, by inspecting the diagonal elements of $\boldsymbol{H}_2$ in~(\ref{H2}), for $\boldsymbol{H}_1$ to be positive definite, $z_i^*$ must be positive for $i\in\mathcal{M}\setminus \{n-1\}$. Combining this result with~(\ref{solutions}), we obtain the expressions in~(\ref{solution1tonminus1}) for $i\in\mathcal{M}\setminus \{n-1\}$. 
	
Next, we investigate the sign of $\xi$ in~(\ref{Xi}) which must also be positive for $\boldsymbol{H}_1$ to be positive definite. 
Knowing that $z_i>0$ for $i\in\mathcal{M}\setminus \{n-1\}$, we investigate four different cases regarding the signs of $z_{n-1}^*$ and $z_n$ and their effect on the sign of $\xi$. We prove that $z_n>0$ and $z_{n-1}^*>0$ as defined in~(\ref{solution1tonminus1}) is the only case that leads to a positive definite $\boldsymbol{H}_1$ and, as a result, a negative definite $\boldsymbol{H}$. 

    \textit{Case 1.} {$z_{n-1}^*<0$, and $z_n<0$: If $z_{n-1}<0$ and $z_n<0$, then the $(n-1,n-1)$ element of $\boldsymbol{H}_1$ in~(\ref{H1}) is negative. In addition, for a matrix to be positive definite, none of its diagonal elements can be negative~\cite{nonlinear2013programming}. As such, if $z_{n-1}<0$, and $z_n<0$, then $\boldsymbol{H}_1$ is not positive definite and hence $\boldsymbol{H}$ is not negative definite.}
    
    \textit{Case 2.} {$z_{n-1}^*<0$, and $z_n>0$: We rewrite $\xi$ in~(\ref{Xi}) as $\xi=\frac{v_n}{v_{n-1}}z_{n-1}^*\phi$, where for $z^*_{n-1}<0$ and $z_n>0$,
    \begin{align}
    \phi=1&+\sum_{i=1}^{n-2}\frac{v_iz_n}{v_n\sqrt{z_n^2\frac{v_i}{v_n}+\frac{v_i-v_n}{k^2v_n}}}\nonumber\\
    &- \frac{v_{n-1}z_n}{v_n\sqrt{z_n^2\frac{v_{n-1}}{v_n}+\frac{v_{n-1}-v_n}{k^2v_n}}}\nonumber\\
    =1&+\sum_{i=1}^{n-2}\frac{v_iz_n}{v_nz_n\sqrt{\frac{v_i}{v_n}}\sqrt{1+\frac{v_i-v_n}{k^2v_iz_n^2}}}\nonumber\\
    &-\frac{v_{n-1}z_n}{v_nz_n\sqrt{\frac{v_{n-1}}{v_n}}\sqrt{1+\frac{v_{n-1}-v_n}{k^2v_{n-1}z_n^2}}}\cdot
    \end{align} 
    
    Based on the binomial approximation, $(1+\epsilon)^\alpha\approx1+\alpha\epsilon$ if $|\epsilon|<1$ and $|\alpha\epsilon|<<1$, let $\epsilon_i=(v_i-v_n)/(k^2v_iz_n^2)$. For practical $k$-approximations, i.e., when $k$ is a large number ($k>>0$), $\epsilon_i$ meets the conditions of the binomial approximation and is such that $\epsilon_i\approx\epsilon_j$ for all $i,j\in\mathcal{M}$. As such, let $\epsilon_i=\epsilon$ for all $i\in\mathcal{M}$, $\phi$ can be approximated as:
    \begin{align}
    \phi\approx 1+\sum_{i=1}^{n-2}\frac{\sqrt{v_i}}{\sqrt{v_n}(1+\epsilon/2)}-\frac{\sqrt{v_{n-1}}}{\sqrt{v_n}(1+\epsilon/2)}.
    \end{align}
    
    Hence, since $v_i>v_{n-1}$ for $i<n-1$ we obtain $\phi>0\Rightarrow\xi<0\Rightarrow\boldsymbol{H_1}$ is not positive definite $\Leftrightarrow\boldsymbol{H}$ is not negative definite; for any practically large approximation index $k$.} 
    
    \textit{Case 3.} {$z_{n-1}^*>0$, and $z_n<0$: We express $\xi$ as $\xi=\frac{v_n}{v_{n-1}}z_{n-1}^*\Xi$, where  
    \begin{align}
    \Xi= 1+\sum_{i=1}^{n-1}\frac{v_i}{v_n}\frac{z_n}{\sqrt{z_n^2v_i/v_n+(v_i-v_n)/(k^2v_n)}}.
    \end{align}
    
    Since $z_{n-1}^*>0$, the sign of $\xi$ is the same as the sign of $\Xi$. By taking the first derivative of $\Xi$ with respect to $k$, we can see that $\frac{\partial \Xi}{\partial k}<0$, since $z_n<0$ and $z_i^*>0$ for all $i\in \mathcal{M}$ and $v_i>v_n$ for all $i \in \mathcal{M}$. Hence $\Xi$ is decreasing in $k$. Knowing that $k>0$, the upper bound of $\Xi$, $\bar{\Xi}$, is then
    \begin{align}
    \bar{\Xi}=\lim\limits_{k\rightarrow 0}\Xi=1+k\sum_{i=1}^{n-1}\frac{v_iz_n}{\sqrt{v_n(v_i-v_n)}}\cdot
    \end{align}
    
    Hence, $\xi$ is negative when
    \begin{align}\label{ConditionXi}
    K>\frac{-1}{\sum_{i=1}^{n-1}\frac{v_iz_n}{\sqrt{v_n(v_i-v_n)}}}\cdot
    \end{align}
    
    Here, we note that $v_n$ has the same order of magnitude as $\sqrt{v_n(v_i-v_n)}$. Hence, for a practical approximation index, $k>>0$, the condition in~(\ref{ConditionXi}) always holds true. Thus, this implies that $\boldsymbol{H_1}$ is not positive definite and, hence, $\boldsymbol{H}$ is not negative definite for any practically large $k$.} 
    
    \textit{Case 4.} {$z_{n-1}^*>0$, and $z_n>0$: For this case, $\xi>0$, hence, $\boldsymbol{H}_1$ is positive definite and as a result $\boldsymbol{H}$ is negative definite. Hence, only for $z_{n-1}^*>0$, and $z_n>0$, $\boldsymbol{H}$ is positive definite.}

Therefore, the solutions $ \boldsymbol{z}^* $ proposed in \eqref{solution1tonminus1} and meeting~(\ref{DConstraint}) such that $z_n>0$, are the only possible local maxima of $\tilde{U}^a(\boldsymbol{z},k)$.	
\end{proof}

In Theorem~\ref{theoremlocamax}, we identified the local maxima of $\tilde{U}^a(\boldsymbol{z},k)$ as given in~(\ref{solution1tonminus1}) and meeting $ z_n>0 $. We also proved that these solutions must satisfy~(\ref{DConstraint}). Next, we study the equality constraint in~(\ref{DConstraint}) which must be met. To this end, we let \vspace{-0.1cm}
\begin{align}
f_k(z_n)\triangleq z_n +\sum_{i=1}^{n-1}\sqrt{\frac{1}{k^2v_n}(k^2z_n^2v_i+v_i-v_n)}=D.
\end{align} 

$f_k(z_n)$ is a strictly convex function with a negative minimizing $z_n$ as stated in Proposition~\ref{convexity}. \vspace{-0.2cm}

\begin{proposition}\label{convexity}
	$ f_k(z_n) $ is a strictly convex function whose minimum occurs at a negative $z_n$.
\end{proposition}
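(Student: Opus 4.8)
The plan is to write $f_k$ as a linear term plus $n-1$ strictly convex summands, deduce strict convexity and coercivity from this decomposition, and then read off the sign of the minimizer from $f_k'(0)$. First I would put $f_k(z_n)=z_n+\sum_{i=1}^{n-1}g_i(z_n)$ with $g_i(z_n)\triangleq\sqrt{\alpha_i z_n^2+\beta_i}$, where $\alpha_i\triangleq v_i/v_n>0$ and $\beta_i\triangleq (v_i-v_n)/(k^2v_n)>0$: the ordering of the valuations gives $v_i>v_n$ for $i\in\mathcal{M}$, so every radicand $\alpha_i z_n^2+\beta_i$ is strictly positive on all of $\mathbb{R}$ and $f_k$ is smooth there.

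For strict convexity I would differentiate $g_i$ twice, obtaining $g_i'(z_n)=\alpha_i z_n/\sqrt{\alpha_i z_n^2+\beta_i}$ and $g_i''(z_n)=\alpha_i\beta_i/(\alpha_i z_n^2+\beta_i)^{3/2}$, which is strictly positive everywhere since $\alpha_i,\beta_i>0$. The linear term $z_n$ has vanishing second derivative, so $f_k''(z_n)=\sum_{i=1}^{n-1}g_i''(z_n)>0$ for all $z_n\in\mathbb{R}$; hence $f_k$ is strictly convex and, in particular, $f_k'$ is strictly increasing. (Equivalently, one may invoke the standard fact that the square root of a positive-definite quadratic is strictly convex and that a sum of a linear function and strictly convex functions is strictly convex.)

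Next I would locate the minimizer. From $f_k'(z_n)=1+\sum_{i=1}^{n-1}\alpha_i z_n/\sqrt{\alpha_i z_n^2+\beta_i}$ we get $f_k'(0)=1>0$, since each summand vanishes at $z_n=0$. For coercivity, note $g_1(z_n)\ge\sqrt{\alpha_1}\,|z_n|$ and $g_i\ge0$, so for $z_n<0$ we have $f_k(z_n)\ge(1-\sqrt{v_1/v_n})\,z_n\to+\infty$ as $z_n\to-\infty$ (because $v_1>v_n$ forces $1-\sqrt{v_1/v_n}<0$), while $f_k(z_n)\ge z_n\to+\infty$ as $z_n\to+\infty$. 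Thus the continuous strictly convex $f_k$ has a unique finite global minimizer $z_n^{\circ}$, characterized by $f_k'(z_n^{\circ})=0$. Since $f_k'$ is strictly increasing and $f_k'(0)=1>0=f_k'(z_n^{\circ})$, we conclude $z_n^{\circ}<0$, i.e. the minimum of $f_k$ occurs at a negative $z_n$.

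The calculations are routine, so the main obstacle is not difficulty but bookkeeping: one must make sure the radicands never vanish, which rests on $v_i>v_n$ for $i\in\mathcal{M}$ (guaranteed by the non-triviality assumption on the valuations — a tie $v_i=v_n$ would make $g_i$ only piecewise linear, hence not strictly convex, and would spoil differentiability at $z_n=0$), and one must handle the coercivity estimate toward $-\infty$, which is exactly where $v_1>v_n$ (the strict inequality in $v_1\ge\cdots\ge v_n$) enters. Everything else follows from the monotonicity of $f_k'$.
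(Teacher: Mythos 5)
Your proof is correct and follows essentially the same route as the paper's: you show $f_k''(z_n)>0$ term by term (your $g_i''=\alpha_i\beta_i/(\alpha_i z_n^2+\beta_i)^{3/2}$ is exactly the paper's reduction of each numerator to $(v_i-v_n)/(k^2v_n)$) and then read the sign of the minimizer off the first-order condition, with your $f_k'(0)=1>0$ plus monotonicity of $f_k'$ being equivalent to the paper's argument that $z_n^{\min}$ must be negative to solve $f_k'(z_n^{\min})=0$; the added coercivity remark is a harmless bonus guaranteeing the minimizer exists. One caveat: your claim that the ordering gives $v_i>v_n$ for \emph{all} $i\in\mathcal{M}$ overstates the paper's assumption (which only forces $v_1>v_n$), but the paper's own proof leans on the same implicit strictness, so this does not distinguish your argument from theirs.
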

\begin{proof}
	To prove the convexity of $ f_k(z_n) $, we compute its second derivative with respect to $z_n$. 
	\begin{align}\label{secondderiv}
		\frac{d^2f}{dz^2_n}&=\sum_{i=1}^{n-1}\hspace{-1mm}\frac{\frac{1}{k^2v_n}\left(k^2z^2_nv_i+v_i-v_n\right) -\frac{v_i}{v_n}z_n^2 }{[\frac{1}{k^2v_n}\left(k^2z^2_nv_i+v_i-v_n\right)]^{3/2}}.
	\end{align} 
	
	By inspecting the numerator of each term in $\frac{d^2f}{dz^2_n}$, every numerator can be reduced to $(v_i-v_n)(k^2v_n)$ which is positive. Hence, $\frac{d^2f}{dz^2_n}$ is a summation of positive terms (at least one of which is strictly positive since at least one $v_i>v_n$ for some $i\in\mathcal{M}$). Hence, $\frac{d^2f}{dz^2_n}>0$ and $f_k(z_n)$ is strictly convex. To obtain the sign of $z_n^{\textrm{min}}$ which minimizes $f_k(z_n)$, we inspect the first derivative of $f_k(z_n)$. Since $z_n^{\textrm{min}}$ is a minimum of $f_k(z_n)$, we must have: 
	\begin{align}\label{Firstderivavtivef}
	\frac{df}{dz_n}\Bigr|_{z_n=z_n^{\textrm{min}}}=1+\sum_{i=1}^{n-1}\frac{\frac{v_i}{v_n}z_n^{\textrm{min}}}{\sqrt{\frac{1}{k^2v_n}\left(k^2z_n^{\textrm{min}^2}v_i+v_i-v_n\right)}}=0.
	\end{align}
	
	Since the denominator of the summation term in~(\ref{Firstderivavtivef}) is always positive, and since $v_i/v_n$ for all $i \in \mathcal{M}$ is always positive, $z_n^{min}$ must be negative to solve~(\ref{Firstderivavtivef}). 
\end{proof}	

Proposition~\ref{convexity} is valuable for characterizing the possible solutions in $z_n$ to $f_k(z_n)=D$, and equivalently to~(\ref{DConstraint}), as stated in Corollary~\ref{corrollary_intersection}. To this end, let $\munderbar{D}\triangleq f_k(z^\text{min}_n)$ be the minimum value of $ f_k(z_n)$ which is unique given that $ f_k(z_n)$ was proven to be strictly convex. \vspace{-0.1cm}

\begin{corollary}\label{corrollary_intersection}
	$ f_k(z_n)=D $ has:
	a) two solutions if $ D>\munderbar{D} $, b) one solution if $D=\munderbar{D} $, or c) no solution if $ D<\munderbar{D} $.
\end{corollary}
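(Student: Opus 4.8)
The plan is to read off all three cases from the one-variable structure established in Proposition~\ref{convexity}: $f_k$ is continuous and \emph{strictly} convex on $\mathbb{R}$ and attains its global minimum $\munderbar{D}=f_k(z_n^{\textrm{min}})$ at the single point $z_n^{\textrm{min}}<0$. Strict convexity plus the fact that the minimum is attained forces the minimizer to be unique, and it forces $f_k$ to be strictly decreasing on $(-\infty,z_n^{\textrm{min}}]$ and strictly increasing on $[z_n^{\textrm{min}},\infty)$; this is the standard secant-line argument for strictly convex functions and needs no differentiability.

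Next I would check that $f_k(z_n)\to+\infty$ as $z_n\to\pm\infty$. As $z_n\to+\infty$ this is immediate, since each radical $\sqrt{\tfrac{1}{k^2v_n}(k^2z_n^2v_i+v_i-v_n)}$ is at least $|z_n|\sqrt{v_i/v_n}$ and the linear term $z_n$ is positive as well. As $z_n\to-\infty$ the linear term pulls down, so a word is needed: since $v_i\ge v_n$ for every $i\in\mathcal{M}$, with strict inequality for at least one $i$, each radical is at least $|z_n|$, so the $n-1\ge1$ radical terms outweigh the single term $z_n$; concretely $f_k(z_n)\ge -(n-2)z_n$ for $z_n<0$ when $n\ge3$, while for $n=2$ one uses $\sqrt{v_1/v_n}>1$ directly. (Equivalently, growth to $+\infty$ at both ends is automatic for any strictly convex function on $\mathbb{R}$ whose minimum is attained, via the slopes of its secants.)

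With these two facts the corollary follows. If $D<\munderbar{D}$, then $f_k(z_n)\ge\munderbar{D}>D$ for all $z_n$, so $f_k(z_n)=D$ has no solution. If $D=\munderbar{D}$, then by uniqueness of the minimizer $f_k(z_n)=\munderbar{D}$ holds only at $z_n=z_n^{\textrm{min}}$, i.e.\ exactly one solution. If $D>\munderbar{D}$, I apply the intermediate value theorem on each branch separately: on $(-\infty,z_n^{\textrm{min}}]$ the function $f_k$ is continuous and strictly decreasing from $+\infty$ down to $\munderbar{D}<D$, giving exactly one solution there; by symmetry there is exactly one solution on $[z_n^{\textrm{min}},\infty)$, and the two are distinct since neither equals $z_n^{\textrm{min}}$. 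Hence $f_k(z_n)=D$ has exactly two solutions.

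The only step that takes any care is the behavior of $f_k$ as $z_n\to-\infty$, where the linear term and the radicals compete; this is precisely where the hypotheses $v_i\ge v_n$ on $\mathcal{M}$ and $n\ge2$ enter. Everything else is the elementary theory of strictly convex functions of one real variable combined with the intermediate value theorem.
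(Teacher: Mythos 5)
Your proof is correct and takes essentially the same route as the paper: both rest on the strict convexity of $f_k$ established in Proposition~\ref{convexity} and count the intersections of the horizontal line at height $D$ with a strictly convex function attaining its unique minimum $\munderbar{D}$. You merely make explicit what the paper leaves implicit, namely the branch-wise strict monotonicity, the intermediate value theorem, and the growth of $f_k$ to $+\infty$ as $z_n\to\pm\infty$ (the $z_n\to-\infty$ side is only justified later in the paper via the slant asymptote $L_-$ in Theorem~\ref{theoremaccsolution}), so your added care there is a welcome but not divergent refinement.
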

\begin{proof}
	From Proposition~\ref{convexity}, we know that $f_k(z_n)$ is a strictly convex function. Therefore any line $D$ parallel to the $z_n$ axis will intersect $f_k(z_n)$ in two points if $D>\munderbar{D}$, one point if $D=\munderbar{D}$, and will not intersect $f_k(z_n)$ if $D<\munderbar{D}$.
\end{proof}

Based on Corollary \ref{corrollary_intersection} and Theorem~\ref{theoremlocamax}, we can conclude that for $ D\geq\munderbar{D} $, we have one or two possible local maxima for $ \tilde{U}^a(\boldsymbol{z},k) $. However, we must check whether these maxima are feasible as they should correspond to $z_n^*>0$, which is a necessary condition as proven in Theorem~\ref{theoremlocamax}. These maxima also must satisfy the following feasibility conditions on $z_i$ for $i\in \mathcal{M}$:\vspace{-0.2cm}
\begin{align}
r_i^a\leq R^a,r_i^b\leq R^b,\Rightarrow -R^b\leq z_i=r_i^a-r_i^b\leq R^a.\label{zconstraint}
\end{align}

Therefore, for any solution $z_n^*>0$ of $ f(z_n)=D $, we must check that $z_i^*$, for $i\in\mathcal{N}$, satisfy \vspace{-0.1cm} 
\begin{align}\label{Maximaconstraints}
0<z_i^*\leq R^a.
\end{align}
\begin{figure}[!t]
	\begin{center}
		\includegraphics[width=0.96\columnwidth]{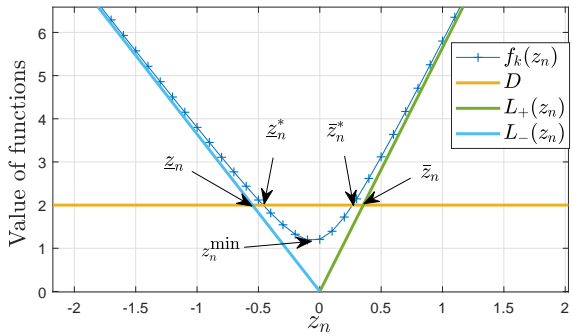}
		\captionsetup{singlelinecheck = false, justification=justified}\vspace{-0.2cm}
		\caption{Comparison of solutions of equations $ f_k(z_n)=D,L_+(z_n)=D$, and $L_-(z_n)=D $.}
		\label{fig:fzn}
	\end{center}\vspace{-0.95cm}
\end{figure}
Next, we show that only one solution of $f_k(z_n)=D$ satisfies~\eqref{Maximaconstraints}.\vspace{-0.2cm}
\begin{theorem}\label{theoremaccsolution}
	\hspace{-1mm}For $ Dk\hspace{-0.5mm}\geq\hspace{-0.5mm}\max\left\{\frac{n-1}{\sqrt{v_n(2n-1)}},\sum_{i=1}^{n-1}\sqrt{{\frac{v_i-v_n}{v_n}}}\right\} $, $ \tilde{U}^a(\boldsymbol{z},k) $ has a unique maximum.
\end{theorem}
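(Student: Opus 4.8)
The plan is to reduce the statement to counting the admissible roots of the scalar equation $f_k(z_n)=D$ and then to upgrade the resulting unique interior critical point to the global maximizer. By Corollary~\ref{corrollary_intersection}, $f_k(z_n)=D$ has at most two roots, and by Proposition~\ref{convexity}, $f_k$ is strictly convex with a strictly negative minimizer; hence $f_k$ is strictly increasing on $[0,\infty)$, so it has \emph{at most one} root with $z_n>0$. Since, by Theorem~\ref{theoremlocamax}, every local maximum of $\tilde U^a(\boldsymbol z,k)$ must satisfy $z_n>0$ and correspond to such a root through~(\ref{DConstraint})--(\ref{solution1tonminus1}), there is therefore at most one candidate local maximum.

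Next I would establish existence of this positive root, using the first half of the hypothesis. Evaluating at the origin, $f_k(0)=\tfrac1k\sum_{i=1}^{n-1}\sqrt{(v_i-v_n)/v_n}$; since $f_k$ is continuous, strictly increasing on $[0,\infty)$, and $f_k(z_n)\to+\infty$, the bound $Dk\ge\sum_{i=1}^{n-1}\sqrt{(v_i-v_n)/v_n}$, i.e. $f_k(0)\le D$, yields by the intermediate value theorem a unique $z_n^*\ge0$ with $f_k(z_n^*)=D$, which is strictly positive in the generic (strict) case. Feeding $z_n^*$ into~(\ref{solution1tonminus1}) gives the remaining components $z_i^*$, and these are automatically admissible: the radicand $k^2z_n^{*2}v_i+v_i-v_n$ is positive because $v_i\ge v_n$, so $z_i^*>0$; and since all $n$ components are positive and sum to $D$ by~(\ref{DConstraint}), each obeys $0<z_i^*<D=R^a-R^b<R^a$, which is exactly~(\ref{Maximaconstraints}). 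For a fully quantitative treatment one can, as in Fig.~\ref{fig:fzn}, sandwich $z_n^*$ between the roots of the affine bounds $L_-(z_n)=z_n\bigl(1+\sum_i\sqrt{v_i/v_n}\bigr)$ and $L_+(z_n)=L_-(z_n)+\tfrac1k\sum_i\sqrt{(v_i-v_n)/v_n}$ of $f_k$ on $[0,\infty)$; this is where the second threshold $\tfrac{n-1}{\sqrt{v_n(2n-1)}}$ enters, controlling the magnitude of $z_n^*$ (hence of the $z_i^*$) so that no component leaves the box $[-R^b,R^a]$ and the negative root of $f_k(z_n)=D$ is irrelevant.

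It then remains to show that this unique admissible critical point is the \emph{global} maximizer of $\tilde U^a$ over the compact feasible polytope $\{\boldsymbol z:\sum_iz_i=D,\ -R^b\le z_i\le R^a\}$, on which a maximizer certainly exists. The key structural fact is that $\tilde U^a(\boldsymbol z,k)=\sum_i\tfrac{v_i}{\pi}\arctan(kz_i)+\textrm{const}$ is a positively weighted sum of the maps $\arctan(kz_i)$, each concave for $z_i\ge0$ and strictly concave for $z_i>0$; hence $\tilde U^a$ is strictly concave on $\{\boldsymbol z:\sum_iz_i=D,\ z_i\ge0\}$, so the critical point found above (all of whose coordinates are positive) is the unique maximizer over that region. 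One must then exclude that a larger value is attained at a feasible point with a negative coordinate or on the boundary of the box; since $\tilde U^a$ is coordinatewise increasing with diminishing marginal returns once a coordinate is positive, a marginal-transfer/KKT comparison of the quantities $v_ik/\bigl(\pi(1+k^2z_i^2)\bigr)$ across battlefields shows any such point can be strictly improved toward the positive orthant precisely when $Dk$ exceeds the stated threshold, forcing the global maximum to coincide with the interior critical point; hence $\tilde U^a(\boldsymbol z,k)$ has a unique maximum.

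I expect this last step — proving that the interior critical point dominates every boundary configuration — to be the main obstacle, because $\tilde U^a$ is not concave on the whole feasible polytope; the bound $Dk\ge\tfrac{n-1}{\sqrt{v_n(2n-1)}}$ is exactly the quantitative ingredient that closes this gap, and verifying that it suffices (via the affine sandwich of $f_k$ together with the comparison of marginal utilities) is the most delicate part of the argument.
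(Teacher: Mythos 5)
Your core argument follows the same skeleton as the paper's proof of this theorem: both reduce the question, via Theorem~\ref{theoremlocamax}, to the roots of $f_k(z_n)=D$, both use Proposition~\ref{convexity} and Corollary~\ref{corrollary_intersection} to discard the smaller root (it lies at or below the negative minimizer $z_n^{\min}$, hence violates $z_n>0$), and both obtain the admissible positive root from the condition $f_k(0)\le D$, which is exactly the first threshold $Dk\ge\sum_{i=1}^{n-1}\sqrt{(v_i-v_n)/v_n}$ in~(\ref{Dk1}). Where you genuinely differ is the feasibility step. Your observation that feasibility is automatic --- all $n$ components of $\boldsymbol{z}^*$ are positive and sum to $D=R^a-R^b<R^a$, so each satisfies~(\ref{Maximaconstraints}) --- is correct and simpler than the paper's route, which instead bounds $z_i^*$ through the asymptote $L_+$ and the chain~(\ref{SecondlimitDK}); it is precisely that bounding step that generates the second threshold $\frac{n-1}{\sqrt{v_n(2n-1)}}$ in~(\ref{Dk2}). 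Consequently your argument needs only the first part of the stated hypothesis, which is harmless since the theorem assumes the maximum of the two bounds; but your speculation that the second threshold is the quantitative ingredient controlling boundary configurations or the negative root does not correspond to how the paper uses it, and you never derive it, so that part of your write-up should either be dropped or replaced by the paper's explicit computation.

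The step you flag as the main obstacle --- upgrading the unique admissible critical point to a \emph{global} maximum over the full polytope, including points with negative coordinates where $\arctan(kz_i)$ is convex --- is indeed not established in your sketch: the ``marginal-transfer/KKT'' comparison is only asserted, and since abandoning a low-value battlefield (a very negative $z_i$) to reinforce others is not obviously dominated, this would require real work. However, this is not a gap relative to the statement as the paper proves it: the paper's proof here establishes only the uniqueness of the feasible local maximum and defers global optimality to Theorem~\ref{GlobalMax}. Your concavity-of-$\tilde{U}^a$ observation on the region $\{\boldsymbol{z}:\sum_i z_i=D,\ z_i\ge 0\}$ is a nice (and arguably cleaner) ingredient for that later step, but as written the exclusion of negative-coordinate allocations remains an unproven claim, so if you intend your proof to cover global maximality you still owe that argument.
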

\begin{proof}
	From Corollary \ref{corrollary_intersection}, we know that, for $ D>\munderbar{D} $, $ f_k(z_n) =D$ has two solutions which we denote as $ \munderbar{z}_n^*<\bar{z}_n^* $. We need to prove that only one of these solutions satisfies \eqref{Maximaconstraints}. First, we prove that $ f_k(z_n) $ has two slant asymptotes $ L_1(z_n) $ and $L_2(z_n)$, where $ f_k(z_n)>L_+(z_n),L_-(z_n) $. To prove this statement, we find the linear approximations of $ f_k(z_n) $ as $z_n$ goes to $+\infty$ and $-\infty$, as follows:\vspace{-0.25cm}
	\begin{align}
	L_+(z_n)&=\lim\limits_{z_n \rightarrow \infty}f_k(z_n)=z_n+\sum_{i=1}^{n-1}z_n\sqrt{\frac{v_i}{v_n}}\nonumber\\
	&=z_n\left(1+\sum_{i=1}^{n-1}\sqrt{\frac{v_i}{v_n}}\right),\\
	L_-(z_n)&=\lim\limits_{z_n \rightarrow -\infty} f_k(z_n)=z_n+\sum_{i=1}^{n-1}(-z_n)\sqrt{\frac{v_i}{v_n}}\nonumber\\
	&=z_n\left(1-\sum_{i=1}^{n-1}\sqrt{\frac{v_i}{v_n}}\right).
	\end{align}
	
	Moreover, to prove that $f_k(z_n)>L_+(z_n)$ we must have:
	\begin{align}
	z_n+\sum_{i=1}^{n-1}\sqrt{\frac{1}{k^2v_n}\left(k^2z^2_nv_i+v_i-v_n\right)}&>z_n\left(1+\sum_{i=1}^{n-1}\sqrt{\frac{v_i}{v_n}}\right)\nonumber\\
	\Leftrightarrow\sum_{i=1}^{n-1}z_n\sqrt{\frac{v_i}{v_n}+\frac{v_i-v_n}{k^2z_n^2v_n}}&>z_n\sum_{i=1}^{n-1}\sqrt{\frac{v_i}{v_n}},\nonumber
	\end{align}
	which always holds true since $(v_i-v_n)/(k^2z_n^2v_n)\geq 0$ for $i\in \mathcal{M}$; with at least one strict inequality for an $i\in \mathcal{M}$. Hence, $f_k(z_n)>L_+(z_n)$.
	Using a similar approach, we can also prove that $ f_k(z_n)>L_-(z_n) $. From $ f_k(z_n)>L_+(z_n),L_-(z_n)$, we can state that $ f_k(z_n) $ always lies in the upper subspace of the intersection of its asymptotes as shown in Fig.~\ref{fig:fzn}. 
	
	Let us define the solutions of $L_+(z_n)=D$ and $L_-(z_n)=D$ as $\bar{z}_n$ and $\munderbar{z}_n$, respectively; then we have: 
	\begin{align}\label{inequal}
	\bar{z}_n>\bar{z}_n^* \geq z_n^\text{min} \geq \munderbar{z}_n^*>\munderbar{z}_n.
	\end{align}
	These inequalities are illustrated in Fig. \ref{fig:fzn}. 
	
	From Proposition~\ref{convexity}, we have $ z_n^\text{min}<0 $, and, thus, $ \munderbar{z}^*_n<0 $. Hence, $\munderbar{z}^*_n<0$ cannot be an acceptable solution for $ f(z_n)=D $ since it violates the condition in \eqref{Maximaconstraints}. We next investigate  $\bar{z}_n^*$. For it to be a valid solution, we must have $ 0<\bar{z}_n^*<R^a$ and the resulting $z_i^*$, which can be computed from~(\ref{solution1tonminus1}), must satisfy the constraints in~(\ref{Maximaconstraints}). Since  $ \bar{z}_n $ is an upper bound of $\bar{z}_n^*$, we start by studying $ \bar{z}_n $. In this respect, next, we prove that $ \bar{z}_n<R^a $:\vspace{-0.2cm}
	\begin{align}\label{zbarplus}
	L_+(z_n)&=D,\Rightarrow z_n\left(1+\sum_{i=1}^{n-1}\sqrt{\frac{v_i}{v_n}}\right)=D, \nonumber\\
	\bar{z}_n&=\frac{D}{\left(1+\sum_{i=1}^{n-1}\sqrt{\frac{v_i}{v_n}}\right)}.
	\end{align}
	
	From~(\ref{zbarplus}), we can see that  $\bar{z}_n<R^a$ always holds true since $ 1+\sum_{i=1}^{n-1}\sqrt{\frac{v_i}{v_n}}>1 $ and $ 0<D<R^a $.
	Hence, since $\bar{z}_n^*<\bar{z}_n$ and $\bar{z}_n<R^a$, then $\bar{z}_n^*<R^a$. 
	
	To satisfy $\bar{z}_n^*>0 $, we must have $ f_k(0)\leq D, $ which yields:
	\begin{align}
	\sum_{i=1}^{n-1}\sqrt{\frac{v_i-v_n}{v_n}}\leq Dk.\label{Dk1}
	\end{align}
	
	Under the condition on $D$ in~(\ref{Dk1}), we obtain $ 0<\bar{z}_n^*<R^a$. However, for $z_n^*=\bar{z}_n^*$ to be a valid solution, the resulting $z_i^*$, for all $i\in \mathcal{M}$ which can be computed from~(\ref{solution1tonminus1}), must satisfy $ 0<z_i^*\leq R^a $. From Theorem~\ref{theoremlocamax}, we know that $z_i^*>0$. To prove that $ z_i^*\leq R^a $ for all $i\in \mathcal{M}$, the maximum $z_i^*$ for all $i\in \mathcal{M}$ must be less that $R^a$. Consider an index $i$ that corresponds to such maximum $z_i^*$. Since $z^*_i=\sqrt{\frac{1}{k^2v_n}\left(k^2\bar{z}_n^{*2}v_i+v_i-v_n\right)}>\sqrt{\frac{1}{k^2v_n}\left(k^2\bar{z}_n^2v_i+v_i-v_n\right)}$ (due to $\bar{z}_n^*<\bar{z}_n$), we show that $\sqrt{\frac{1}{k^2v_n}\left(k^2\bar{z}_n^2v_i+v_i-v_n\right)}\leq R^a$. Therefore, we must have:	
	\begin{align}\label{SecondlimitDK}
	\sqrt{\bar{z}_n^2\frac{v_i}{v_n}+\frac{v_i-v_n}{v_nk^2}}\leq R^a,\nonumber\\
	\Leftrightarrow\frac{D^2}{\left(1+\sum_{i=1}^{n-1}\sqrt{\frac{v_i}{v_n}}\right)^2}\frac{v_i}{v_n}+\frac{v_i-v_n}{v_nk^2}\leq{R^a}^2,\nonumber \\
	\Leftrightarrow\frac{D^2v_i}{\left(\sum_{i=1}^{n}\sqrt{{v_i}}\right)^2}+\frac{v_i-v_n}{v_nk^2}\leq{R^a}^2.
	\end{align}
	
	Since $ D<(n-1)R^b=\frac{n-1}{n}R^a \Rightarrow R^a>(nD)/(n-1)$, and the maximum value for $ v_i $ is 1, then, by comparing the maximum value of the left-hand side of the inequality in~(\ref{SecondlimitDK}) with the minimum value of the right-hand side, we can find a minimum value for $ k $ as follows:
	\begin{align}
	{D}^2+\frac{1}{v_nk^2}\leq{D}^2(\frac{n}{n-1})^2,\nonumber\\
	Dk\geq \frac{n-1}{\sqrt{v_n(2n-1)}}.\label{Dk2}
	\end{align}
	
	From \eqref{Dk1} and \eqref{Dk2}, we can conclude the proof by stating that the unique maximum of $ \tilde{U}^a(\boldsymbol{z},k) $, for $ Dk\geq\max\left\{\frac{n-1}{\sqrt{v_n(2n-1)}},\sum_{i=1}^{n-1}\sqrt{{\frac{v_i-v_n}{v_n}}}\right\} $, is $\boldsymbol{z}^*$ such that $ z_i^* $ for $i\in \mathcal{M}$ is as defined in \eqref{solution1tonminus1}, and $z_n^*=\bar{z}_n^* $.
\end{proof}
In Theorem~\ref{theoremaccsolution}, we derived a unique local maximum of $\tilde{U}^a(\boldsymbol{z},k)$ for $ Dk\geq\max\left\{\frac{n-1}{\sqrt{v_n(2n-1)}},\sum_{i=1}^{n-1}\sqrt{{\frac{v_i-v_n}{v_n}}}\right\} $. In Theorem~\ref{GlobalMax}, we prove that this local maximum is actually a global maximum of $\tilde{U}^a(\boldsymbol{z},k)$. 

\begin{theorem}\label{GlobalMax}
The unique local maximum $ \boldsymbol{z}^*=[z_1^*,\dots,z_{n-1}^*,z_n^* ]$, where $z_n^*=\bar{z}_n^*$ and $ z^*_i=\sqrt{\frac{1}{k^2v_n}\left(k^2z^2_nv_i+v_i-v_n\right)}$ for $i \in  \mathcal{M}$, is a unique global maximum of  $\tilde{U}^a(\boldsymbol{z},k)$ for $ Dk\geq\max\left\{\frac{n-1}{\sqrt{v_n(2n-1)}},\sum_{i=1}^{n-1}\sqrt{{\frac{v_i-v_n}{v_n}}}\right\} $. 
\end{theorem}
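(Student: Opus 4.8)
The plan is to promote the two necessary-condition results (Theorems~\ref{theoremlocamax} and~\ref{theoremaccsolution}) to a sufficiency statement using (i) compactness for existence, and (ii) the concavity of $\tilde U^a(\cdot,k)$ on the region of non-negative allocation differences, which is where $\boldsymbol z^*$ already lives.

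First I would note that, after Proposition~\ref{lemmaresources}, the feasible set $\mathcal Q=\{\boldsymbol z:\sum_i z_i=D,\ -R^b\le z_i\le R^a\}$ is a nonempty compact polytope, so the continuous function $\tilde U^a(\cdot,k)$ attains a global maximum $\boldsymbol{\hat z}$ on it. Next, on the sub-polytope $\mathcal Q_0=\mathcal Q\cap\{z_i\ge 0\ \forall i\}$ each summand $\tfrac{v_i}{\pi}\arctan(kz_i)$ has second derivative $-\tfrac{2k^3v_iz_i}{\pi(1+k^2z_i^2)^2}\le 0$, so $\tilde U^a(\cdot,k)$ is concave on $\mathcal Q_0$ and strictly concave on $\{z_i>0\}$. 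By Theorems~\ref{theoremlocamax} and~\ref{theoremaccsolution}, $\boldsymbol z^*\in\mathcal Q_0$ (indeed $0<z_i^*\le R^a$), and the first-order identities~\eqref{FirstDerivative} state precisely that $\nabla\tilde U^a(\boldsymbol z^*,k)$ is a positive multiple $\lambda(1,\dots,1)$; hence for any $\boldsymbol z\in\mathcal Q_0\subset\{\sum_i z_i=D\}$ the concavity inequality gives $\tilde U^a(\boldsymbol z,k)\le\tilde U^a(\boldsymbol z^*,k)+\nabla\tilde U^a(\boldsymbol z^*,k)^\top(\boldsymbol z-\boldsymbol z^*)=\tilde U^a(\boldsymbol z^*,k)+\lambda(D-D)=\tilde U^a(\boldsymbol z^*,k)$, with equality only at $\boldsymbol z^*$. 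Thus $\boldsymbol z^*$ is the unique maximizer of $\tilde U^a(\cdot,k)$ over $\mathcal Q_0$.

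It then remains to show that $\max_{\mathcal Q}\tilde U^a(\cdot,k)$ is attained in $\mathcal Q_0$. If $\boldsymbol{\hat z}$ lies in the relative interior of $\mathcal Q$, it is an unconstrained local maximizer of $\tilde U^a(\cdot,k)$ on the hyperplane $\sum_i z_i=D$, hence satisfies the first- and second-order conditions established in the proof of Theorem~\ref{theoremlocamax}, which force every component positive; so $\boldsymbol{\hat z}\in\mathcal Q_0$ and $\boldsymbol{\hat z}=\boldsymbol z^*$ by the previous step. If instead $\boldsymbol{\hat z}$ lies on the relative boundary of $\mathcal Q$ (some $\hat z_i$ pinned at $-R^b$ or $R^a$), I would argue that it is dominated by $\boldsymbol z^*$: as long as $\boldsymbol{\hat z}$ retains a negative component, a small transfer of resource from a positive coordinate on which $\arctan(k\cdot)$ is already near its saturation onto the most negative coordinate strictly raises $\tilde U^a$ (such a ``saturated'' positive coordinate exists because the positive coordinates must sum to at least $D$, and the hypothesis $Dk\ge\max\{\tfrac{n-1}{\sqrt{v_n(2n-1)}},\sum_{i=1}^{n-1}\sqrt{(v_i-v_n)/v_n}\}$ together with $0<D<(n-1)R^b$ and $R^a<nR^b$ makes the relevant arguments of $\arctan$ large enough for the exchange to be profitable); iterating and passing to the limit yields a point of $\mathcal Q_0$ of no smaller utility. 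Combining the two cases, $\boldsymbol z^*$ is the unique global maximizer of $\tilde U^a(\cdot,k)$ over $\mathcal Q$, which is the claim.

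The step I expect to be the main obstacle is exactly this last boundary analysis: unlike on the non-negative region, $\tilde U^a(\cdot,k)$ is not concave once a coordinate becomes negative, so one cannot finish with a single tangent-plane inequality, and the exchange argument must be made quantitative — conceding a low-value battlefield yields only a small marginal gain $\tfrac{v_i k/\pi}{1+k^2\hat z_i^2}$, so the proof genuinely needs the stated lower bound on $Dk$ (equivalently, that the equilibrium gaps $z_i^*$ are not too small) to rule out that such a boundary configuration matches $\tilde U^a(\boldsymbol z^*,k)$. Everything else — existence, the concavity bound on $\mathcal Q_0$, and the reduction of the relative-interior case to Theorem~\ref{theoremlocamax} — is routine.
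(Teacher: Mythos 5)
Your argument on the nonnegative region is sound and in fact takes a genuinely different route from the paper: the paper finishes by re-running the Hessian sign analysis of Theorem~\ref{theoremlocamax} to show that no stationary point can be a local minimum and then infers that the unique local maximum of Theorem~\ref{theoremaccsolution} must be global, whereas you exploit that the separable utility is concave on the set you call $\mathcal{Q}_0$, that the first-order conditions \eqref{FirstDerivative} make all partial derivatives of $\tilde U^a$ equal (and positive) at $\boldsymbol z^*$, and that the tangent-plane inequality on the hyperplane $\sum_i z_i=D$ then gives maximality over $\mathcal{Q}_0$ directly. On that region your argument is clean and arguably tighter than the paper's.

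The genuine gap is exactly the step you flag, and it is not merely a missing computation: the exchange argument for a maximizer $\hat{\boldsymbol z}$ with a negative coordinate fails as stated. The directional derivative along $e_i-e_j$ (raising the most negative coordinate $\hat z_i$ at the expense of a positive coordinate $\hat z_j$) equals $\frac{k}{\pi}\bigl(\frac{v_i}{1+k^2\hat z_i^2}-\frac{v_j}{1+k^2\hat z_j^2}\bigr)$; when $|\hat z_i|$ is large (a conceded battlefield) the first term is tiny, so if the winning margins $\hat z_j$ are only moderate the transfer strictly \emph{lowers} $\tilde U^a$. Hence ``concede one low-value battlefield and enlarge the margins elsewhere'' configurations can be constrained local maxima on the boundary of $\mathcal{Q}$, and no local or iterated transfer can dispose of them; what is required is a global value comparison: such a point earns at most roughly $1-v_i$ plus small terms, while $\tilde U^a(\boldsymbol z^*,k)=1-\sum_{i=1}^{n}\frac{v_i}{\pi}\bigl(\frac{\pi}{2}-\arctan(kz_i^*)\bigr)$, so one must show, using the hypothesis on $Dk$ to bound $kz_i^*$ from below, that the total shortfall at $\boldsymbol z^*$ is smaller than the loss from conceding any battlefield. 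That quantitative comparison is the substance of the boundary case and is absent from your proposal (you explicitly defer it), so as written you have established maximality only over $\mathcal{Q}_0$ and over interior critical points, not over all feasible $\boldsymbol z$ satisfying \eqref{Maximaconstraints}'s relaxation $-R^b\le z_i\le R^a$. (A secondary, smaller point: in your interior case you invoke the second-order analysis of Theorem~\ref{theoremlocamax}, which really uses negative \emph{definiteness}, a sufficient rather than necessary condition, so degenerate interior critical points are not literally excluded; the paper shares this looseness, but it is worth being aware that your reduction inherits it.)
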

\begin{proof}
Theorem~\ref{theoremaccsolution} showed that the derived local maximum of $\tilde{U}^a(\boldsymbol{z},k)$ for a given limit on $Dk$ is unique. Next, we prove that $\tilde{U}^a(\boldsymbol{z},k)$ has no local minimum.

The proof follows the same logic as that of Theorem~\ref{theoremlocamax} by attempting to find solutions in~(\ref{solutions}) which lead to a positive definite Hessian matrix $\boldsymbol{H}$. In this regard, $\boldsymbol{H}$ is positive definite $\Leftrightarrow$ $\boldsymbol{H_1}$ is negative definite $\Leftrightarrow$ all pivots of $\boldsymbol{H}_2$ are negative. As such, a local minimum must have all $z_i^*<0$ for $i\in \mathcal{M} \setminus \{n-1\}$ obtained from~(\ref{solutions}) and must have $\xi<0$ which is defined in~(\ref{Xi}). To this end, similarly to the four-step case analysis derived in  Theorem~\ref{theoremlocamax} over the signs of $z_n$ and $z_{n-1}^*$, we can prove that for $z_i^*<0$ for $i\in\mathcal{M} \setminus \{n-1\}$ no feasible $z_n$ and $z_{n-1}^*$ lead to $\xi<0$. In fact, it can be derived that $(z_{n-1}^*<0, z_n<0)$ and $(z_{n-1}^*>0, z_n>0)$ lead to non-feasible solutions that violate~(\ref{DConstraint}) while $(z_{n-1}^*>0, z_n<0)$ and $(z_{n-1}^*<0, z_n>0)$ lead to $\xi>0$, and hence, a non-positive definite $\boldsymbol{H}$. 

Therefore, $z_n^*=\bar{z}_n^*$ and $ z^*_i=\sqrt{\frac{1}{k^2v_n}\left(k^2z^2_nv_i+v_i-v_n\right)}$ for $i\in \mathcal{M}$ is a unique local maximum of $\tilde{U}^a(\boldsymbol{z},k)$ and $\tilde{U}^a(\boldsymbol{z},k)$ admits no local minima. Hence, this unique local maximum is a global maximum. 
\end{proof}

Based on the derived maximum of $ \tilde{U}^a(\boldsymbol{z},k) $ with respect to $ \boldsymbol{z} $, we next characterize the pure-strategy NE for the GCBG.

\begin{theorem}\label{NEtheorem}
	\hspace{-1mm}For $ k\hspace{-1mm}\geq\hspace{-1mm}\max\left\{\hspace{-0.5mm}\frac{1}{D}\frac{n-1}{\sqrt{v_n(2n-1)}},\frac{1}{D}\sum_{i=1}^{n-1}\sqrt{{\frac{v_i-v_n}{v_n}}}\hspace{-0.5mm}\right\}$, the pure-strategy NE for the GCB game is the allocation vectors defined as follows:
	\begin{align}
		{\boldsymbol{r}^b}^*&=[{r_1^b}^*,\dots,{r_n^b}^*],\label{aNE}\\
		{\boldsymbol{r}^a}^*&={\boldsymbol{r}^b}^*+[z_1^*,\dots,\bar{z}_n^*],\label{bNE}
	\end{align}
	where $ z_n^*=\bar{z}_n^* $ (the positive solution of $ f_k(z_n)=D $), and 
	\begin{align}\label{zis}
		z_i^*= \sqrt{{z_n^{*2}}\frac{v_i}{v_n}+\frac{v_i-v_n}{v_nk^2}},\,\,\,\textrm{for } i\in \mathcal{M}
	\end{align}
	and
	\begin{align}\label{ResourcesrbNE}
	{r_1^b}^*+\dots+{r_n^b}^*=R^b.
	\end{align}
\end{theorem}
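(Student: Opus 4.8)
The plan is to build on the constant-sum structure established in \eqref{constantsum}: a pure-strategy NE of the GCBG is exactly a saddle point of $\tilde U^a$ and, equivalently, a solution of the minimax program \eqref{maximin}. All of the analytical work needed for this is already in place. By Proposition~\ref{lemmaresources} we may restrict both players to budget-exhausting allocations, so the difference vector $\boldsymbol z=\boldsymbol r^a-\boldsymbol r^b$ of \eqref{zi} always satisfies $\sum_{i=1}^{n}z_i=D$ by \eqref{sumdif}; and by Theorems~\ref{theoremlocamax}--\ref{GlobalMax}, on that face $\tilde U^a(\boldsymbol z,k)$ has a \emph{unique} global maximizer $\boldsymbol z^*$, with $z_n^*=\bar{z}_n^*>0$ the positive root of $f_k(z_n)=D$ and $z_i^*$ given by \eqref{zis} for $i\in\mathcal M$, valid whenever $Dk\ge\max\{(n-1)/\sqrt{v_n(2n-1)},\,\sum_{i=1}^{n-1}\sqrt{(v_i-v_n)/v_n}\}$. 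Dividing this bound through by $D>0$ reproduces exactly the hypothesis on $k$ in the statement, under which $\boldsymbol z^*$ is moreover interior (all $z_i^*>0$, from Theorems~\ref{theoremlocamax} and \ref{theoremaccsolution}) and feasible in the sense of \eqref{Maximaconstraints}.

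First I would compute player $a$'s best response to an \emph{arbitrary} $\boldsymbol r^b\in\mathcal Q^b$ and show it is $\boldsymbol r^a=\boldsymbol r^b+\boldsymbol z^*$. Feasibility is direct: the components $r_i^b+z_i^*$ are strictly positive because $z_i^*>0$; they sum to $\sum_i r_i^b+D$, which equals $R^a$ once $b$ exhausts its budget, so each of them is also at most $R^a$; and the per-battlefield lower bound $z_i\ge-r_i^b$ in \eqref{zconstraint} is slack at $\boldsymbol z^*$ for the same reason. Optimality follows because $\arctan$ is strictly increasing, so $\tilde U^a$ is nondecreasing in each $z_i$; hence (using Proposition~\ref{lemmaresources} for $a$) an optimal response lives on the face $\sum_i z_i=D$, on which $\boldsymbol z^*$ is the global maximizer. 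Thus $\max_{\boldsymbol r^a\in\mathcal Q^a}\tilde U^a(\boldsymbol r^a,\boldsymbol r^b,k)=\tilde U^a(\boldsymbol z^*,k)$ for every budget-exhausting $\boldsymbol r^b$. Since this inner maximum is nonincreasing in each $r_i^b$ (a variant of Proposition~\ref{lemmaresources} applied to $b$, whose own utility is $1-\tilde U^a$), the outer minimum in \eqref{maximin} is attained on the full-budget set, where the inner maximum is the constant $\tilde U^a(\boldsymbol z^*,k)$; hence $V^a(k)=\tilde U^a(\boldsymbol z^*,k)$ and, by \eqref{constantsum}, $V^b(k)=1-\tilde U^a(\boldsymbol z^*,k)$.

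Finally I would exhibit the equilibrium profile: take any $\boldsymbol r^{b*}$ with $\sum_i {r_i^b}^*=R^b$ (this is \eqref{ResourcesrbNE}) and set $\boldsymbol r^{a*}=\boldsymbol r^{b*}+[z_1^*,\dots,\bar{z}_n^*]$, which is feasible by the computation above and yields \eqref{aNE}, \eqref{bNE} and \eqref{zis}. It remains to verify that $(\boldsymbol r^{a*},\boldsymbol r^{b*})$ is a saddle point. The $a$-side is immediate, since $\boldsymbol r^{a*}$ attains the global maximum $\tilde U^a(\boldsymbol z^*,k)$ of $\tilde U^a(\cdot,\boldsymbol r^{b*},k)$ over $\mathcal Q^a$. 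The step I expect to be the main obstacle --- indeed the delicate core of the theorem --- is the $b$-side: showing that no $\boldsymbol r^b\in\mathcal Q^b$ yields $\tilde U^a(\boldsymbol r^{a*},\boldsymbol r^b,k)<\tilde U^a(\boldsymbol z^*,k)$, i.e.\ that $b$ cannot gain. This cannot be settled by a first/second-order perturbation of $\boldsymbol z$ around $\boldsymbol z^*$ (which is a \emph{maximum}, not a minimum, of $\tilde U^a$), so it must be argued through the value/minimax equivalence of \eqref{maximin} together with the uniqueness of $\boldsymbol z^*$ from Theorem~\ref{theoremaccsolution}, the global-maximality from Theorem~\ref{GlobalMax}, and the feasibility bounds \eqref{zconstraint}--\eqref{Maximaconstraints} that constrain how far $b$ can push $\boldsymbol z$ away from $\boldsymbol z^*$. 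Once that is in place, $(\boldsymbol r^{a*},\boldsymbol r^{b*})$ is a saddle point of the constant-sum GCBG, hence a pure-strategy NE with the announced allocations.
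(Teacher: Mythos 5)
Your proposal follows the paper's route (minimax formulation of the constant-sum game, Proposition~\ref{lemmaresources}, and the unique global maximizer $\boldsymbol{z}^*$ from Theorems~\ref{theoremlocamax}--\ref{GlobalMax}) up to and including the computation of player $a$'s best response, but it then stops at what you call the ``$b$-side'' and never supplies an argument for it: you state that it ``must be argued through the value/minimax equivalence'' together with uniqueness and feasibility, and defer the conclusion with ``once that is in place.'' That is a genuine gap --- the closing step is asserted as an expectation rather than proved. What is worth noting is that the paper closes exactly this step with an ingredient you already derived in your second paragraph: substituting $a$'s best response $\boldsymbol{r}^{a*}=\boldsymbol{r}^{b}+\boldsymbol{z}^*$ gives
\begin{align}
\tilde{U}^a(\boldsymbol{r}^{a*},\boldsymbol{r}^{b},k)=\frac{1}{2}+\sum_{i=1}^{n}\frac{v_i}{\pi}\arctan\left(kz_i^*\right),\nonumber
\end{align}
which is independent of $\boldsymbol{r}^{b}$; hence every full-budget $\boldsymbol{r}^{b*}$ attains the outer minimum in \eqref{maximin}, and since the paper identifies the pure-strategy NE of the constant-sum GCBG with the solution of the minimax problem \eqref{maximin}, the profile \eqref{aNE}--\eqref{ResourcesrbNE} follows immediately, together with the value $V^a(k)=\tfrac12+\sum_i\frac{v_i}{\pi}\arctan(kz_i^*)$. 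You had this constancy in hand but did not make the final identification, asking instead for something the paper never does.

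Indeed, the verification you request --- that no $\boldsymbol{r}^{b}\in\mathcal{Q}^{b}$ makes $\tilde{U}^a(\boldsymbol{r}^{a*},\boldsymbol{r}^{b},k)$ fall below $\tilde{U}^a(\boldsymbol{z}^*,k)$ with $\boldsymbol{r}^{a*}$ held fixed --- is a direct best-response check for player $b$, which is a different (and stronger) statement than the constancy of the inner maximum over $\boldsymbol{r}^{b}$: once $\boldsymbol{r}^{a*}$ is frozen and $\boldsymbol{r}^{b}$ varies, the difference vector moves away from $\boldsymbol{z}^*$, and nothing in Theorems~\ref{theoremlocamax}--\ref{GlobalMax} (which concern maximization over $\boldsymbol{z}$, i.e., over $a$'s choice) addresses that perturbation. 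The paper does not carry out such a check; it stays entirely at the level of the minimax problem and the constant-sum identity \eqref{constantsum}. So your instinct that this is the delicate point of the theorem is reasonable, but as a reconstruction of the paper's proof your argument is incomplete precisely where the paper takes its shortest step, and you should either adopt the paper's minimax identification explicitly or produce the deviation analysis you sketch --- neither of which your proposal currently contains.
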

\begin{proof}
	As stated in \eqref{maximin}, the pure-strategy NE of the GCBG is the set of two vectors $ \boldsymbol{r}^{a*} $ and  $ \boldsymbol{r}^{b*} $, which solve: 
	\begin{align}
	\min_{\boldsymbol{r}^b\in \mathcal{Q}^b}\max_{\boldsymbol{r}^a\in \mathcal{Q}^a}\tilde{U}^a (\boldsymbol{r}^a,\boldsymbol{r}^b,k).
	\end{align}
	
	Maximizing $ \tilde{U}^a(\boldsymbol{r}^a,\boldsymbol{r}^b,k) $, with respect to $ \boldsymbol{r}^a $ is similar to maximizing $ \tilde{U}^a(\boldsymbol{z},k) $ with respect to $ \boldsymbol{z} $, where $ \boldsymbol{z}= \boldsymbol{r}^a-\boldsymbol{r}^b $. In Theorem \ref{GlobalMax}, we have shown that, if $ Dk\geq\max\left\{\frac{n-1}{\sqrt{v_n(2n-1)}},\sum_{i=1}^{n-1}\sqrt{{\frac{v_i-v_n}{v_n}}}\right\} $, then $ \tilde{U}^a(\boldsymbol{z},k) $ has only one maximim at $ \boldsymbol{z}^*=[z_1^*,\dots,z_{n-1}^*,\bar{z}_n^* ]$, where  $ \bar{z}_n^* $ is the positive solution of $ f_k(z_n)=D $, and $ z_i^* $ is as defined in \eqref{solution1tonminus1}. Therefore, $ \boldsymbol{r}^{a*} $ can be written as follows:
	\begin{align}
	\boldsymbol{r}^{a*}=\boldsymbol{r}^{b}+\boldsymbol{z}^*.
	\end{align}
	
	By substituting $ \boldsymbol{r}^{a*} $ in $ \tilde{U}^a (\boldsymbol{r}^a,\boldsymbol{r}^b,k) $, we will have:
	\begin{align}
	\tilde{U}^a (\boldsymbol{r}^{a*},\boldsymbol{r}^b,k)&=\sum_{i=1}^{n} \left(\frac{v_i}{\pi}\arctan\left(k(r^{a*}_i-r^b_i)\right)+\frac{v_i}{2}\right)\nonumber\\
	&=\sum_{i=1}^{n} \frac{v_i}{\pi}\arctan\left(k(z_i^*)\right)+\frac{1}{2},
	\end{align}
	which has a constant value. Therefore, the minimizing vector $ \boldsymbol{r}^{b*} $ consists of all the strategies in player $b$'s strategy set, $ \mathcal{Q}^b $. Therefore, the pure strategy NE of the GCB is as defined in \eqref{aNE}-\eqref{ResourcesrbNE}. Moreover, the value of the game for each player can be expressed as follows:\vspace{-0.2cm}
	\begin{align}
	V^a(k)\triangleq 1- V^b(k)=\frac{1}{2}+\sum_{i=1}^{n} \frac{v_i}{\pi}\arctan\left(kz_i^*\right).
	\end{align}\vspace{-0.4cm}
\end{proof}

From Theorem~\ref{NEtheorem}, we can see that the GCB admits an infinite number of pure-strategy NEs. However, we have characterized all of them and we have also shown that, by deviating from one NE to the other, the payoff of each player does not change. 

Moreover, Theorem~\ref{NEtheorem} shows that the player with the higher available resources can always achieve a higher payoff by playing the derived pure-strategy NE, since $ z_i^{*}>0 $ for $i\in\mathcal{N}$ which yields $ V^a(k)>V^b(k) $. This implies that, if both players are fully rational, the more resourceful player wins the game in aggregate, which captures a missing feature in the CBG. Indeed, in the CBG, even when a player is more resourceful, it cannot find a winning pure strategy. In this respect, based on the $k-$approximation in the proposed GCBG, we can significantly increase the approximation index $k$ so that the GCBG very closely approximates the CBG and still be able to derive a solution to the game in pure strategies. Here, we note that, when $k$ goes to infinity, our derived solution will no longer hold since the first order conditions in~(\ref{FirstDerivative}) will be met for any $z_i$ ($i\in\mathcal{M}$). This is aligned with the CBG which does not admit an NE in pure strategies, and indeed, confirms that when $k$ goes to infinity our proposed GCBG converges to the classical CBG. Next, we illustrate the GCBG and derived pure-strategy NE using an example.\vspace{-1mm}
\section{An Illustrative Example}\vspace{-0.1cm}
In this numerical example, we consider a GCBG with $10$ battlefields and a random valuation for the battlefields (with a randomly chosen $v_n=0.0215$). We first compute the NE of the game for different values of $D$. The results are shown in Fig.~\ref{fig:allocations}. To satisfy the condition in Theorem~\ref{NEtheorem} on $k$ and $D$, we choose $k=50$.  
As can be seen in Fig.~\ref{fig:allocations}, the difference between the allocated resources over each battlefield at the NE increases as $D$ increases. 

In Fig. \ref{fig:payoff}, we derive the players' payoffs for different $D=R^a-R^b$. As expected, Fig. \ref{fig:payoff} shows that as the difference between the available resources increases, the payoff for the most resourceful player increases. Moreover, in Fig. \ref{fig:kcomparison} we derived the NE for different values of the approximation index, $k$, of the GCBG with $ D=5 $. Fig.~\ref{fig:kcomparison} shows that as $k$ increases, the most resourceful player achieves a higher total payoff. This result illustrates the effect of the $k-$approximation in the GCBG. In fact, as can be seen in~(\ref{kpayoff}) and Fig.~\ref{fig:atan}, for lower values of $k$, the effect of the consumed resources is more accounted for in the expression of the utility function which results in a lower total payoff for the winning player. 
\begin{figure}[!t]
\begin{center}
	\includegraphics[width=0.95\columnwidth]{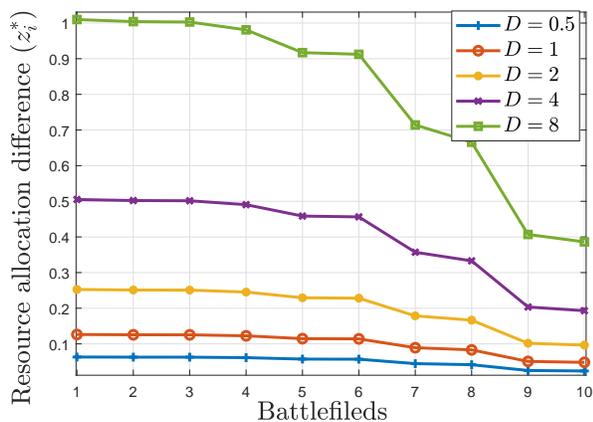}\vspace{-0.2cm}
	\caption{Difference between the allocation of resources at NE}
	\label{fig:allocations}
	\end{center}\vspace{-0.7cm}
\end{figure}
\begin{figure}[!t]
\begin{center}
	\includegraphics[width=0.95\columnwidth]{./Figures/payoffab}\vspace{-0.2cm}
	\caption{Payoffs of the players at NE.}
	\label{fig:payoff}
	\end{center}\vspace{-0.9cm}
\end{figure}
\begin{figure}[!t]
\begin{center}
	\includegraphics[width=0.95\columnwidth]{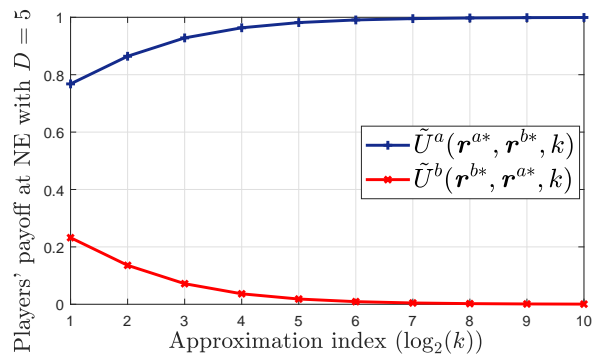}\vspace{-0.2cm}
	\caption{Payoffs of the players at NE with $ D=5 $.}
	\label{fig:kcomparison}
	\end{center}\vspace{-0.8cm}
\end{figure}
\vspace{-2.5mm}
\section{Conclusion}
In this paper, we have proposed a novel generalized Colonel Blotto game framework to analyze competitive resource allocation problems. When the approximation index tends to infinity, we have proven that the proposed game converges to the classical Blotto game. However, in contrast with the Blotto game, our proposed game admits an NE in pure strategies, which renders the derived solutions amenable to practical applications. For this new class of CBG, we have proven the existence of a pure-strategy NE and provided a detailed derivation of closed-form analytical expressions of the NE. Finally, we have presented a numerical example which illustrates the effects of the increase in the number of resources of each player on the game's outcome and NE strategies. Our results also showcased the effect of the approximation index on the solution of the game. 
\def\baselinestretch{0.85}
\bibliographystyle{IEEEtran}
\bibliography{references}
\end{document}